\newcommand{\comment}[1]{}
\providecommand{\captiontitle}[2]{\caption[#1]{\textit{#1.} #2}}
\providecommand{\deriv}[2]{\frac{\mathrm{d} #1}{\mathrm{d}#2}}
\providecommand{\matel}[3]{\langle #1 \!\! \mid \!\! #2 \!\! \mid \!\! #3 \rangle}
\providecommand{\integral}[3]{\int_{#1}^{#2}  \mathrm{d}#3\:}
\providecommand{\iintegral}[6]{\int_{#1}^{#2} \!\! \mathrm{d}#3 \!\! \int_{#4}^{#5} \!\! \mathrm{d}#6\:}
\providecommand{\iiintegral}[9]{\int_{#1}^{#2} \!\! \mathrm{d}#3 \!\! \int_{#4}^{#5} \!\! d#6 \!\! \int_{#7}^{#8} \!\! \mathrm{d}#9\:}
\providecommand{\integraal}[4]{\int_{#1}^{#2}  \mathrm{d}#3 \mathrm{d}#4\:}
\providecommand{\integ}[2]{\int \limits_{#1}  \mathrm{d}#2\:}
\providecommand{\ket}[1]{\mid \! #1 \rangle}
\providecommand{\cumul}[1]{\langle\!\langle #1 \rangle\!\rangle}
\providecommand{\HI}{H_\textsc{I}}
\providecommand{\ev}[1]{\left\langle #1 \right\rangle}
\providecommand{\kl}{k_\textsc{l}}
\providecommand{\kr}{k_\textsc{r}}
\providecommand{\mul}{\mu_\textsc{l}}
\providecommand{\mur}{\mu_\textsc{r}}
\newtheorem{theorem}{Theorem}[section]
\newtheorem{proposition}[theorem]{Proposition}
\newcommand{\tr}{\operatorname{tr}}
\newcommand{\chit}{\widetilde{\chi}}
\title{Statistics of charge transport and unconventional time ordering}
\author{V. Beaud, G.M. Graf, A.V. Lebedev,\\
\small{Theoretische Physik, ETH Zurich, 8093 Zurich, Switzerland}\\
 G.B. Lesovik\\
\small{L.D. Landau Institute for Theoretical Physics RAS, 117940 Moscow, Russia} }
\begin{document}
\maketitle
\begin{abstract}
	The statistics of charge transport across a tunnel junction with energy--dependent scattering is investigated. A model with quadratic dispersion relation is discussed in general and, independently, in the two limiting cases of a large detector and of a linear dispersion relation. The measurement of charge takes place according to various protocols. It is found that, as a rule, the statistics is expressed by means of time--ordered current correlators, differing however from the conventional ones in the ordering prescription. Nevertheless binomial statistics is confirmed in all cases.
\end{abstract}
\section{Introduction}

The statistics of charge transport $\Delta Q$ through a junction is encoded in its generating function
\begin{equation}
\label{eq_1}
	\chi(\lambda)=\ev{e^{i\lambda \Delta Q}}
	= \sum \limits_{n=0}^{\infty} \frac{(i\lambda)^{n}}{n!}\,\ev{(\Delta Q)^{n}}\,.
\end{equation}
Different notions of charge transport, reflecting different measurement protocols, have been proposed. Each of them has been more or less closely associated to current correlators, which involve some time ordering prescription (like the Dyson \cite{levitov:93} or Keldysh \cite{beenakker:01, belzig:01, lesovik:03, salo:06} variants) or do not \cite{levitov:92}. The resulting statistics differ, but are always expressed in terms of the transparency $T$ of the junction. Somewhat unprecisely, the 3rd cumulant is
 \begin{equation}
\label{eq_2}
\cumul{(\Delta Q)^{3}}\,\propto\, -2T^2(1-T)\,,\qquad
\cumul{(\Delta Q)^{3}}\,\propto\, T(1-T)(1-2T)\,,
\end{equation}
depending on whether time ordering is forgone, resp. imposed. Matters are however more subtle, since it was shown \cite{lesovik:03,sadovski:11} that, in some model with linear dispersion and depending on further details, the first result (\ref{eq_2}) arises even if the correlators are taken to be time ordered. The issue was clarified in \cite{graf:09}, where it was shown that: (i) as a rule, both variants of time ordering require amendment, giving way to unconventional prescriptions; and (ii) in the particular case of that same model, the new procedure yields the second result (\ref{eq_2}), at least in the Dyson variant.

The purpose of the present work is again twofold: First, to confirm item (i), though from a more general perspective, emphasizing the role of gauge coupling; and, second, to compute cumulants on the basis of the new procedure for models or variants not considered before. In particular, we will consider a model with quadratic dispersion and the cases in which the time ordering prescription reduces to the conventional one. In contrast to previous studies, binomial statistics is confirmed in all regimes.\\

For concreteness we present our results first within the setting of \cite{levitov:96}: A system which is endowed with charge and coupled to a spin $\tfrac{1}{2}$, the purpose of which is to serve as a detector, specifically as a galvanometer. Let the Hamiltonian of the combined system be
\begin{equation}
\label{eq_901}
	H(\lambda\sigma_3 /2) = \begin{pmatrix} H(\lambda /2) & 0 \\ 0 & H(-\lambda /2)\end{pmatrix}\,,
\end{equation}
where the coupling $\lambda$ and the Hamiltonian $H(\lambda)$ will be specified later. For now we note that the spin precesses about the $3$-axis, since $\sigma_3$ is conserved.

Let $P\otimes\rho_\mathrm{i}$ be the initial joint state of the system and of the spin, and set $\langle A \rangle = \tr(P A)$, where $A$ is any operator of the system proper. Then
\begin{equation}
\label{eq_907}
	\chi(\lambda) = \left\langle e^{i H(-\lambda/2)t}e^{-iH(\lambda/2)t}\right\rangle
\end{equation}
is the {\it influence functional} describing the spin state alone at a later time $t$,
\begin{equation}\label{if}
\rho_\mathrm{i}
=\begin{pmatrix}\rho_{++}&\rho_{+-}\\\rho_{-+}&\rho_{--}\end{pmatrix}\longmapsto
\rho_\mathrm{f}=
\begin{pmatrix}\rho_{++}&\rho_{+-}\chi(\lambda)\\
\rho_{-+}\chi(-\lambda)&\rho_{--}\end{pmatrix}\,,
\end{equation}
the representation being again in the eigenbasis of $\sigma_3$. With a grain of salt it may also be identified with the generating function in Eq.~(\ref{eq_1}), see item C3 below for details.

We restate (\ref{eq_907}) as
\begin{equation}
\label{eq_911}
	\chi(\lambda) = \left\langle \overrightarrow{T} \exp \left( i \integral{0}{t}{t'}\HI(-\tfrac{\lambda}{2},t')\right)
		\overleftarrow{T} \exp\left(-i \integral{0}{t}{t'}\HI(\tfrac{\lambda}{2},t')\right)\right\rangle\,,
\end{equation}
where $H$ is the Hamiltonian of the isolated system and
\begin{equation*}
	\HI(\lambda, t)= e^{i H t}(H(\lambda)-H)e^{-i H t}
\end{equation*}
is the Hamiltonian in the interaction picture; moreover $\overleftarrow{T}$, $\overrightarrow{T}$ denote the usual and the reversed time ordering. We recall that the time ordering is supposed to occur inside the integrals once the exponentials are expanded in powers of $\lambda$. Eq.~(\ref{eq_911}) follows  by inserting $1=\exp (-iHt)\exp (iHt)$ in the middle of the expectation (\ref{eq_907}) and by performing a Dyson expansion. \\

Let $Q$ be the charge to the right of the junction. It is considered to be a primitive quantity, corresponding to $Q\otimes 1$ for the combined system, but independent of $\lambda$. By contrast and as implied by charge conservation, the current is then a derived quantity, namely the rate of charge: 
\begin{equation}
\label{eq_910}
	I(\lambda\sigma_3 /2) = \deriv{}{t}Q\otimes 1 = i\left[ H(\lambda\sigma_3 /2), Q\otimes 1 \right]
=\begin{pmatrix} I(\lambda /2) & 0 \\ 0 & I(-\lambda /2)\end{pmatrix}\,,
\end{equation}
where $I(\lambda)=i\left[H(\lambda), Q\right]$.

We shall next present two general types of Hamiltonians $H(\lambda)$ to be used in Eq.~(\ref{eq_901}). They are constructed from the Hamiltonian $H$ and from the charge $Q$. At first our goal is to emphasize structure. Later a more concrete physical meaning will be attached to the construction by means of a series of examples for $H$ and of remarks about $H(\lambda\sigma_3 /2)$.

The general Hamiltonians $H(\lambda)$ are as follows.
\begin{enumerate}
	\item[A1.] \textit{Linear coupling}. The Hamiltonian is
	\begin{equation*}
		H(\lambda) = H - \lambda I\,,
	\end{equation*}
	where $I=i[H,Q]$ is the bare current through the junction, i.e. in absence of coupling. Hence,
	\begin{equation*}
		\HI(\lambda, t)=-\lambda I(t)\,,
	\end{equation*}
where $I(t)$ is the bare current in the interaction picture, and Eq.~(\ref{eq_911}) reads
	\begin{equation}
	\label{eq_915}
		\chi(\lambda)=
		\left\langle \overrightarrow{T} \exp\left(i\tfrac{\lambda}{2}\integral{0}{t}{t'}I(t')\right)
		\overleftarrow{T} \exp\left(i\tfrac{\lambda}{2}\integral{0}{t}{t'}I(t')\right)\right\rangle\,,
	\end{equation}
	which is the announced and well-known relation with current correlators (Keldysh time ordering). This setting is closely related to that of \cite{kindermann:01}.

	\item[A2.] \textit{Gauge coupling}. Gauge transformations are generated by local charges $Q$. The Hamiltonian $H$ transforms as
	\begin{align}
	\label{eq_917}
		H(\lambda)& = e^{i \lambda Q} H e^{-i \lambda Q} \\
	\label{eq_918}
			&= H - \lambda I - i \frac{\lambda^{2}}{2}[Q,I] + O(\lambda^{3})\,,\qquad (\lambda\to 0)\,,
	\end{align}
	where $I=i[H,Q]$. (A specific model illustrating that coupling scheme on a spin will be considered shortly in C1.) Local currents are obtained by varying the gauge,
\begin{equation}
\label{eq_903}
	H'(\lambda)\equiv\deriv{}{\lambda}H(\lambda) =-i\left[H(\lambda), Q\right]= -I(\lambda)\,,
\end{equation}
which results in
\begin{equation*}
		I(\lambda) = e^{i \lambda Q} I e^{-i \lambda Q}\,.
	\end{equation*}
The current parallels the kinematic momentum of a particle in that its value is gauge invariant, but its representation as an operator is not, i.e. $I(\lambda)\neq I$ as a rule. This is reflected in the current correlators, which are now those of
	\begin{equation*}
	\HI(\lambda, t)=-\integral{0}{\lambda}{\lambda'}I(\lambda',t)\,,
	\end{equation*}
showing that Eq.~(\ref{eq_915}) is in need of amendment, and not just by replacing $I(t')$ with $I(\mp\lambda/2, t')$. The central result is that $\chi(\lambda)$ can still be expressed by bare current correlators; however Eq.~(\ref{eq_915}) has to be modified as
	\begin{equation}
	\label{eq_919}
		\chi(\lambda)=
		\left\langle \overrightarrow{T}^{*} \exp\left(i\tfrac{\lambda}{2}\integral{0}{t}{t'}I(t')\right)
		\overleftarrow{T}^{*} \exp\left(i\tfrac{\lambda}{2}\integral{0}{t}{t'}I(t')\right)\right\rangle\,,
	\end{equation}
	where the unconventional ordering $\overleftarrow{T}^{*}$ means that the derivative in
	\begin{equation*}
		I(t) = \deriv{Q(t)}{t}\,,\qquad Q(t) = e^{i H t} Q e^{-i H t}
	\end{equation*}
	has to be taken after the time ordering,
	\begin{equation}
	\label{eq_921}
		\overleftarrow{T}^{*}\left(I(t_1)\cdots I(t_n)\right)
		:= \frac{\partial}{\partial t_{n}}\cdots\frac{\partial}{\partial t_{1}}\,
		\overleftarrow{T}\left(Q(t_1)\cdots Q(t_n)\right)\,,
	\end{equation}
	(Matthews' time ordering \cite{matthews:49}); likewise for $\overrightarrow{T}^{*}$. Eq.~(\ref{eq_919}) follows from (\ref{eq_911}) by the identity
	\begin{equation}
	\label{eq_922}
		\overleftarrow{T} \exp\left(-i\integral{0}{t}{t'}\HI(\lambda,t')\right)
		= \overleftarrow{T}^{*} \exp\left(i\lambda \integral{0}{t}{t'} I(t')\right)\,,
	\end{equation}
	which will be discussed in Sect.~\ref{comp}. In summary: The $T$-ordering in connection with $\HI$ is equivalent to ${T}^{*}$-ordering in connection with $-\lambda I$, see Eqs.~(\ref{eq_911}, \ref{eq_919}).
\end{enumerate}
Item A1 is included in A2 as the special case in which
	\begin{equation}
	\label{eq_916}
		[Q,I]=0\,,
 	\end{equation}
see Eq.~(\ref{eq_918}). Then the $T$ and ${T}^{*}$-orderings of currents may be used interchangeably. Moreover,  the operator of current (\ref{eq_910}) becomes insensitive to the inclusion of spin: $I(\lambda)=I$.

Fairly concrete examples illustrating the general types are provided by a {\it single} particle moving on the line. Depending on its position $x\in\mathbb{R}$, the charge $Q$ to the right of the junction is $1$ or $0$, and is in fact implemented as the multiplication operator by the Heaviside function $\theta(x)$. That function can be replaced quite conveniently by a smooth version thereof, $Q=Q(x)$; the size of the region where they differ may be loosely associated with that of the detector. Through second quantization the examples implicitly describe {\it many} independent particles, too.

\begin{enumerate}
	\item[B1.] \textit{Linear dispersion}. The Hamiltonian $H=p+V(x)$ on $L^{2}(\mathbb{R})$ describes a right moving particle. The current $I=i[H,Q]=Q'(x)$ satisfies $[Q,I]=0$, i.e. (\ref{eq_916}), whence the $T$-ordering suffices as a rule. By combining two copies of such a model, scattering between channels of left and right movers can be included. However Eq. (\ref{eq_916}) fails in the limiting but computationally simple case of $Q(x)=\theta(x)$ and $V(x)$ a point interaction at $x=0$ \cite{graf:09}. As a result, $T^{*}$-ordering is required in this special case.

	\item[B2.] \textit{Quadratic dispersion}. The Hamiltonian is $H=p^{2}+V(x)$. Then
	\begin{gather}
	\label{eq_924}
		I = i[H,Q] = p Q'(x) + Q'(x) p\,,\\
	\label{eq_925}
		[Q,I] = 2i Q'(x)^{2} \neq 0\,,
	\end{gather}
which calls for $T^{*}$-ordering as a rule. However, in the limiting case of an ever smoother transition function we have $Q'^{2}\to 0$ (in any reasonable norm). As a result, $T$-ordering should suffice in that special case, as we will confirm.
\end{enumerate}

A few remarks will now address the role of the spin.

\begin{enumerate}
	\item[C1.] We recall \cite{levitov:94} that the Hamiltonian (\ref{eq_901}, \ref{eq_917}) is physically realized by a spin coupled to the current flowing in a wire, as we presently explain. Let the straight wire run along the 1--axis and let $\vec{x}_0$ be the position of the spin in the 12--plane.
The vector potential due to the spin $\vec{\sigma}/2$ is
\begin{equation*}
\vec{A}=\vec{\nabla}f\wedge \frac{\vec{\sigma}}{2}\,,
\end{equation*}
where $f(\vec{x})=\mu|\vec{x}-\vec{x}_0|^{-1}$. (More general functions $f$ are obtained by smearing the position of the spin.) A particle in the wire couples to the spin through $(\vec{p}- \vec{A})\cdot\vec{e}_1\equiv p-\vec{A}\cdot\vec{e}_1$, where
\begin{equation*}
\vec{A}\cdot\vec{e}_1=(\vec{e}_1\wedge\vec{\nabla}f)\cdot \frac{\vec{\sigma}}{2}
=(\vec{e}_1\wedge\vec{\nabla}f)\cdot\vec{e}_3\frac{\sigma_3}{2}=\frac{\partial f}{\partial x_2}\frac{\sigma_3}{2}\,,
\end{equation*}
since by the stated geometry $\vec{e}_1\wedge\vec{\nabla}f$ lies in the 3--direction. We note that $\partial f/\partial x_2=O(r^{-2})$, $(r=|\vec{x}|\to\infty)$; hence, along the wire, $\partial f/\partial x_2=\lambda Q'(x)$ for some coupling $\lambda$ and a function $Q(x)$ of the kind described above. In summary: As $p$ gets replaced by
\begin{equation*}
(\vec{p}- \vec{A})\cdot\vec{e}_1=p-\lambda Q'(x)\frac{\sigma_3}{2}=e^{i\lambda Q\sigma_3/2}pe^{-i\lambda Q\sigma_3/2}\,,
\end{equation*}
so does $H$ by $H(\lambda\sigma_3 /2)$.
\item[C2.]
In the previous item $\lambda Q'(x)$ arises as a connection. It appears with the replacement $\lambda\to\lambda \sigma_3/2$, by which it acts non-trivially on the spin. As we presently explain, that property provides a geometric mechanism (though different from the physical one) by which the rotation of the spin becomes a counter of the transported charge. The mechanism somehow resembles that of a screw, whose motion rigidly links rotation with translation. More precisely, a state $\psi$ of the combined system obeys parallel transport along the line if
\begin{equation*}
\bigl(p-\lambda Q'(x)\frac{\sigma_3}{2}\bigr)\psi=0\,,
\end{equation*}
or equivalently if
\begin{equation*}
 \deriv{\psi}{x}=i\lambda Q'(x)\frac{\sigma_3}{2}\psi\,.
\end{equation*}
Given that a spin state $\psi$ changes by $\mathrm{d}\psi=-i(\vec{\sigma}\cdot\vec{e}/2)\psi\mathrm{d}\theta$ under a rotation by $\mathrm{d}\theta$ about $\vec{e}$, the condition states that a charge transport $\mathrm{d}Q=Q'(x)\mathrm{d}x$ is linked to a precession $\mathrm{d}\theta=-\lambda \mathrm{d}Q$.

\item[C3.] The conclusion of the previous item can not be reached for the physical evolution of the combined system, as generated by the Hamiltonian (\ref{eq_901}), at least not without further ado. In Bohr's spirit \cite{bohr} and in elaboration of \cite{levitov:94} it is convenient to talk about the apparatus in classical terms, here a classical spin $\vec{s}\in\mathbb{R}^{3}$, ($\vert \vec{s}\vert =1$). The spin components $s_i$ have Poisson brackets $\lbrace s_i, s_j\rbrace = \epsilon_{ijk}s_k$ and the Hamiltonian is $H(\lambda s_3)$. In particular Eq.~(\ref{eq_901}) would be recovered by quantization. The equations of motion are
\begin{equation*}
	\dot{s_i} = \lbrace s_i, H(\lambda s_3)\rbrace = \lambda H'(\lambda s_3) \lbrace s_i, s_3\rbrace\,,
\end{equation*}
or, by (\ref{eq_903}),
\begin{equation*}
	\dot{\vec{s}} = -\lambda I(\lambda s_3) \vec{e}_3\wedge\vec{s}\,.
\end{equation*}
The angle of precession thus is
\begin{equation}
\label{eq_906}
	\theta = -\lambda q\,,
\end{equation}
revealing the charge $q$ that has flowed during a time interval $[0,t]$. In this context $\chi$ can be interpreted as the {\it generating function} of the transported charge:
\begin{equation*}
	\chi(\lambda) = \integ{}{q} \widehat{\chi}(q) e^{i  \lambda q}\,.
\end{equation*}
Indeed, since
\begin{equation*}
\rho_\mathrm{s}(\theta)=
\begin{pmatrix}\rho_{++}&\rho_{+-}e^{-i\theta}\\
\rho_{-+}e^{i\theta}&\rho_{--}\end{pmatrix}\,,
\end{equation*}
is the state $\rho_\mathrm{i}$ of the spin after precession by the angle $\theta$, its final state (\ref{if}) is \cite{levitov:96}
\begin{equation*}
	\rho_\mathrm{f} = \integ{}{q} \widehat{\chi}(q)\rho_s(-\lambda q)\,.
\end{equation*}
In view of (\ref{eq_906}), this is consistent with $\widehat{\chi}(q)$ being the  probability (density) of transport $q$, as claimed. The interpretation is however hampered by the fact that $\widehat{\chi}(q)$ may fail to be positive \cite{kindermann:03}.
\end{enumerate}
We should also mention an earlier approach \cite{levitov:93, muzykanskii:03} to charge transport, which does not explicitly model a detector. It is based on two measurements of the charge $Q$, occurring at times $0$ and $t$. The transported charge is then identified with the difference $\Delta Q$ of their outcomes. The associated generating function is
\begin{equation}
\label{eq_909}
	\chit(\lambda) = \left\langle e^{i \lambda Q(t)}e^{-i \lambda Q}\right\rangle\,,
\end{equation}
at least in the case when the initial state $\rho$ is an eigenstate of $Q$ or an incoherent superposition of such, i.e., for $[\rho,Q]=0$; then $\chit$ actually agrees with the expression (\ref{eq_907}). We will use the definition beyond this restriction, because it is irrelevant in the limit of large times. (See however \cite{shelankov:03} for the unrestricted definition.) By
$(e^{i H t}e^{i \lambda Q}e^{-i H t})e^{-i \lambda Q} = e^{i H t}(e^{i \lambda Q}e^{-i H t}e^{-i \lambda Q})$ we may restate the generating function in a form closer to (\ref{eq_907}),
\begin{equation*}
	\chit(\lambda) = \left\langle e^{i Ht}e^{-iH(\lambda)t}\right\rangle\,,
\end{equation*}
with $H(\lambda)$ as in Eq.~(\ref{eq_917}); and further in terms of current correlators by means of (\ref{eq_922})
\begin{equation}\label{eq_919bis}
	\chit(\lambda) = \left\langle\overleftarrow{T}^{*} \exp\left(i\lambda \integral{0}{t}{t'} I(t')\right)\right\rangle\,,
\end{equation}
where the star can again be dropped under the assumption (\ref{eq_916}).

We shall now describe the main result. It confirms the binomial statistics of charge transport in a variety of situations. Specifically, in the long--time limit the 2nd and 3rd cumulants of charge transport are
\begin{align}
\label{eq_19}
	\lim\limits_{t\to\infty}\frac{1}{t}\cumul{(\Delta Q)^{2}}
	&= \frac{1}{2 \pi} \integral{\mur}{\mul}{E} T(E)\left( 1-T(E)\right)\,,\\
\label{eq_20}
	\lim\limits_{t\to\infty}\frac{1}{t}\cumul{(\Delta Q)^{3}}
	&= \frac{1}{2 \pi} \integral{\mur}{\mul}{E} T(E)\left( 1-T(E)\right)\left( 1-2\,T(E)\right)\,,
\end{align}
where $\mur<\mul$ are the Fermi energies of the states incoming from the right and the left sides of the junction, and $T(E)$ is the transmission probability (transparency) at energy $E$. (Eq.~(\ref{eq_19}) was first obtained in~\cite{lesovik:89} without any time ordering prescription.) The results apply to
\begin{itemize}
\item[D1.] either generating function, Eq.~(\ref{eq_919}) or (\ref{eq_919bis});
\item[D2.] Hamiltonians with linear or quadratic dispersion relation (see items B, but in second quantization);
\item[D3.] independently of how sharp the jump of $Q$ is, i.e. of the width over which $Q(x)$ differs from $\theta(x)$.
\end{itemize}
Of some interest is the way that independence arises. The time ordering (\ref{eq_921}) is spelled out for $n=2$ as
\begin{align}
	\overleftarrow{T}^{*}(I(t_1)I(t_2))
		&= \frac{\partial}{\partial t_{2}}\frac{\partial}{\partial t_{1}}\bigl(Q(t_1)Q(t_2)\theta(t_{1}-t_{2})+Q(t_2)Q(t_1) \theta(t_{2}-t_{1})\bigr)\nonumber\\
		&= \overleftarrow{T}(I(t_1)I(t_2))+[Q(t_1),I(t_1)]\delta(t_{1}-t_{2})\,,\label{ect}
\end{align}
and thus differs from the usual one, $\overleftarrow{T}$, by {\it contact terms} supported at coinciding times; likewise for $n=3$, where
\begin{equation}
\overleftarrow{T}^{*}(I_1 I_2 I_3) = \overleftarrow{T}(I_1 I_2 I_3) + 3\delta(t_2 - t_3) \overleftarrow{T}(I_1 [Q_2, I_2])
	 	+ \delta(t_1 - t_2)\delta(t_2 - t_3)[Q_1,[Q_1,I_1]]\label{ect1}
	\end{equation}
with the shorthand notation $I_i=I(t_i)$ (for general $n$, see \cite{graf:09}). Depending on circumstances, the terms in the expansions contribute variably to the invariable results (\ref{eq_19}, \ref{eq_20}), as we now detail.

Items D2, D3 come with interpolating parameters: The Fermi wavelength $\lambda_\textsc{F}$, with $\lambda_\textsc{F}\to 0$ as the linear dispersion is approached through rescalings
\begin{equation*}
\frac{\lambda_\textsc{F}}{2}(p\pm \lambda_\textsc{F}^{-1})^2-\frac{\lambda_\textsc{F}^{-1}}{2}\to \pm p
\end{equation*}
of right and left movers; and the width $l$ of the transition region. In the limit $l\to\infty$ \cite{Ng} of an ever larger detector $Q$ ceases to be defined. In the opposite limit $l\to 0$ and in the case of quadratic dispersion ($\lambda_\textsc{F}>0$), the commutator $[Q,I]$ diverges even in the sense of distributions, because of $Q'(x)\to \delta(x)$ (see item B2). One can though discuss the limits of the correlators (\ref{eq_19}, \ref{eq_20}) in these limits, and of their parts, but not those of the models themselves. However a model with $\lambda_\textsc{F}=0$, $l=0$ exists \cite{graf:09}, describing a
scatterer and a detector which are both pointlike and coincident.

Let us discuss Eq.~(\ref{eq_919bis}) first. The contributions of contact terms to the cumulants (\ref{eq_19}, \ref{eq_20}) are {\it non-trivial}, except in the limits for which $l/\lambda_\textsc{F}\to \infty$, as shown in Fig.~\ref{fig_1}.
\begin{figure}[hbtp]
\centering
\input{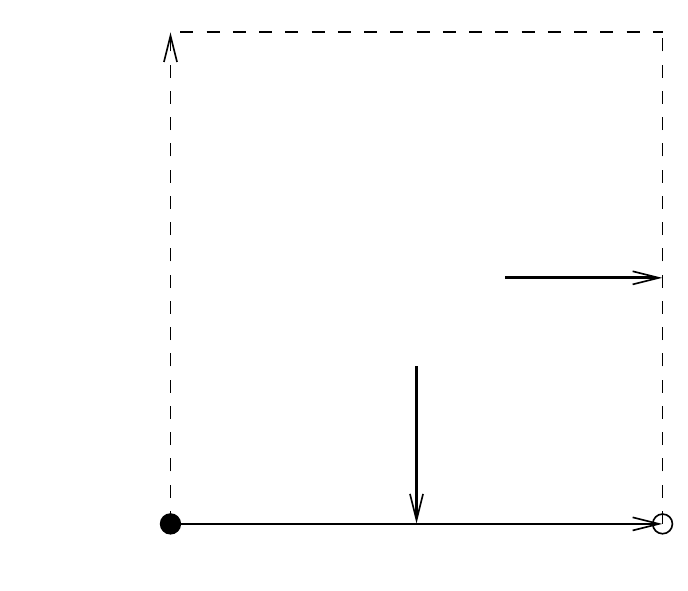_t}
\captiontitle{Matthews' vs. usual time ordering of current correlators}{The parameter range $(l,\lambda_\textsc{F})$ of models is shown as a square, which includes solid parts of the boundary. In the limits (thick arrows) of linear dispersion ($\lambda_F\to0$), or large detectors ($l\to\infty$), the contact terms appearing in Matthews' time ordering, $T^{*}$, vanish. In these limiting cases there is agreement with usual time ordering, $T$.}
\label{fig_1}
\end{figure}
As for Eq.~(\ref{eq_919}) the same is true for the 3rd cumulant; however for the 2nd the contact terms cancel between $\overrightarrow{T}^{*}$ and $\overleftarrow{T}^{*}$.\\

Relation with the literature is eased through books and the review articles on noise and counting statistics, and among them \cite{blanter:00,kindermann:03,sadovski:11}.\\

The plan of the article is as follows. In Sect.~\ref{sec_models} we introduce a model with quadratic dispersion relation and review the main features of a limiting case with linear dispersion. In Sect.~\ref{sec_overview} we will explain the broad structure of the computation of the cumulants and emphasize the methods. The first part of Sect.~\ref{sec_derivations} is devoted to the detailed derivation of asymptotic binomial statistics for the model with quadratic dispersion. In the second and third parts, the limiting cases of a large detector and of linear dispersion are given independent derivations. In Sect.~\ref{comp} we recall the reason for the $T^{*}$ time ordering and discuss the equivalence between the methods used here, resp. elsewhere, as e.g.~in \cite{lesovik:03}. Finally, two appendices collect some auxiliary results. Appendix~\ref{sec_temp_distr} contains the long--time limits of some distributions, while in Appendix~\ref{sec_matrix_elements} matrix elements of charge and current are computed.\\

\section{The models}
\label{sec_models}

\subsection{The quadratic dispersion model}
\label{subsec_quadratic_model}

We consider two conducting leads connected through a junction and model the whole device by independent fermions moving on the real line. The single-particle Hamiltonian is $H=p^{2}+V(x)$, acting on the Hilbert space $L^{2}(\mathbb{R})$. The kinetic energy is quadratic in the momentum $p=-id/dx$; the potential $V$ describes the junction and vanishes away from it, i.e., outside of some interval $[ -x_{0},\,x_{0}]$. The potential will enter the discussion only through its reflection and transmission amplitudes, $r(k)$ and $t(k)$. They can be read off from the \textit{Lippmann--Schwinger (LS) states} $\vert \psi_{k} \rangle$: Continuum eigenstates of $H$ of incoming momentum $k\neq 0$ and eigenvalue $E = k^{2}$ have wave--functions $\psi_{k}$ given outside that interval as
\begin{equation}\label{eq_21}
\begin{aligned}
        k > 0\; : \quad \psi_{k}(x)&=\begin{cases} e^{ikx}+r(k) e^{-ikx}\,, \quad & (x  <  -x_{0})\\
	t(k) e^{ikx}\,, \quad &(x  >  x_{0})
	\end{cases}\\
	k < 0\; : \quad \psi_{k}(x)&=\begin{cases} t(k) e^{ikx}\,, \quad & (x <  -x_{0})\\
	e^{ikx}+r(k) e^{-ikx}\,, \quad &(x  >  x_{0})\,.	
	\end{cases}
\end{aligned}
\end{equation}
Note that states with $k>0$ ($k<0$) have incoming parts that are right (left) moving. By the Schr\"odinger equation the scattering matrix
\begin{equation*}
	S(k) = \begin{pmatrix}
			t(k) & r(-k)\\
			r(k) & t(-k)
		\end{pmatrix}
\end{equation*}
is unitary. In particular, the transmission and reflection probabilities are even in $k$, whence $T(k^2) := \vert t(\pm k) \vert^2$ and $R(k^2) := \vert r(\pm k) \vert^2$, and satisfy
\begin{align}
\label{eq_23}
	T(E) + R(E) = 1\,.
\end{align}
LS states form a (continuum) basis
of $L^{2}(\mathbb{R})$ normalized as
\begin{equation}
\label{eq_24}
	\frac{1}{2 \pi} \integral{\mathbb{R}}{}{k} \vert \psi_{k} \rangle \langle \psi_{k} \vert = \mathds{1}\,.
\end{equation}
Time--reversal invariance of $H$ is, incidentally, a property which is not relied upon, in that the above discussion still applies when $p$ is replaced by $p-A(x)$, at least as long as $A$ has the same support properties as $V$.

As mentioned in the introduction, the charge to the right of the junction may be implemented on $L^{2}(\mathbb{R})$ as a multiplication operator, $Q=Q(x)$. More specifically, we assume
\begin{equation}
\label{eq_25}
	Q(x) = \begin{cases} 0\,, \qquad &(x < x_{0})\\1\,, &(x \gg x_{0})\,.
	\end{cases}
\end{equation}
The left and right leads are assumed to be reservoirs with energy levels occupied up to Fermi energies $\mul,\mur>0$ biased by $V=\mul-\mur>0$. The occupation of LS states thus is
\begin{equation}
\label{eq_27}
 \rho(k)= \begin{cases}1\,,\quad &(-\kr\le k\le \kl)\\
0\,,\quad &\text{otherwise}\end{cases}\,,
\end{equation}
where $k_{\textsc{l},\textsc{r}}=(\mu_{\textsc{l},\textsc{r}})^{1/2}$; or for short $\rho(k)=\theta(k\in J)$ where $J:=[-\kr,\kl]$. More precisely, $\rho\equiv\rho(k)$ is the single--particle density matrix ($0\le \rho=\rho^*\le 1$) of the many--particle state $\langle\cdot\rangle$; actually $\rho$ determines a {\it quasi--free fermionic state} $\langle\cdot\rangle$, which for practical purposes means that expectations of many--particle operators can be computed by means of Wick's rule. As a matter of fact, for $\rho$ a projection as in (\ref{eq_27}) the state $\langle\cdot\rangle$ is necessarily quasi--free.

\subsection{The linear dispersion model}
\label{subsec_linear_model}

We briefly review the main features of the linear dispersion relation model used in \cite{graf:09} and underlying the computations of Sect.~\ref{subsec_limit_linear}. For a more detailed exposition, we refer to the original paper.

In the limit of long times (or low frequencies) it appears appropriate to linearize the dispersion relation near the Fermi energy. A suitable model arises by reinterpreting the two leads on either side of the junction: Rather than viewing them as non--chiral half--lines, they are now (full) chiral lines. In absence of scattering, which now amounts to a cut junction, the Hamiltonian is linear in the momentum $p=-id/dx$ and is given as
\begin{equation*}
	H_0 = \begin{pmatrix} p &0 \\ 0&p \end{pmatrix}\,,
\end{equation*}
on $L^{2}(\mathbb{R})\oplus L^{2}(\mathbb{R})$. A point scatterer is then placed at $x=0$; it results in a unitary scattering matrix
\begin{equation*}
	S = \begin{pmatrix}
			r & t'\\
			t & r'
		\end{pmatrix}\,,
\end{equation*}
which is independendent of energy. In particular, $T=|t|^2=|t'|^2$ and $R=|r|^2=|r'|^2$ still satisfy Eq.~(\ref{eq_23}). The single-particle charge operator is the projection onto the right lead,
\begin{equation*}
	Q = \begin{pmatrix} 0 &0 \\ 0 &1 \end{pmatrix}\,,
\end{equation*}
and the initial single--particle density matrix is the projection
\begin{equation*}
	\rho = \begin{pmatrix} \theta(\mul-p) &0 \\ 0 &\theta(\mur-p) \end{pmatrix}\,,
\end{equation*}
representing two infinitely deep Fermi seas biased by $V = \mul - \mur>0$. The condition $[\rho,Q]=0$, underlying the unrestricted use of the generating function (\ref{eq_909}), is satisfied here.

A feature of the model is that the scattering process is instantaneous in the sense that the position of the point scatterer coincides with that of the detector. As a result, $[Q,I]\neq 0$ (\cite{graf:09}, Eq.~(3.17)), and the contact terms arising from $T^{*}$-ordering matter. In terms of the discussion given at the end of the introduction, the model has vanishing length scale $l=0$.

However the scattering process can be regarded as strictly causal by separating the two positions by $l>0$. This is achieved by replacing the charge operator $Q$ by its regularization $Q_l:= Q\theta(\vert x\vert > l)$, and accordingly the current $I$ by $I_l:=i[H,Q_l]=Q[\delta(x-l)-\delta(x+l)]$. Then the commutator $[Q_l,I_l]$ vanishes and with it all the contact terms.

\section{Overview}
\label{sec_overview}

Before engaging in the detailed computation of the cumulants (\ref{eq_19}, \ref{eq_20}) it is worthwhile giving an overview of the methods involved, and illustrating them in simple instances. The physical setting has been discussed at length in the introduction and will be recalled only briefly. We consider two leads separated by a tunnel junction, with particles in an initial multi--particle state $\langle\cdot\rangle$. We investigate the statistics of charge transport, $\Delta Q$, across the junction and during a time $t$. Specifically, we are interested in its moments $\ev{(\Delta Q)^{n}}$, determined as the expansion coefficients of some generating function, see Eq.~(\ref{eq_1}); and actually in the long--time limit of the associated cumulants $\cumul{(\Delta Q)^n}$.\\

\noindent{\bf Generating functions.} In the introduction two distinct generating functions were presented:
\begin{align}
\label{eq_402}
	\chit(\lambda) &= \ev{\overleftarrow{T}^{*} \exp\left(i\lambda \integral{0}{t}{t'} I(t')\right)}\,,\\
	\chi(\lambda)&=
		\ev{\overrightarrow{T}^{*} \exp\left(i\tfrac{\lambda}{2}\integral{0}{t}{t'}I(t')\right)
		\overleftarrow{T}^{*} \exp\left(i\tfrac{\lambda}{2}\integral{0}{t}{t'}I(t')\right)}\,,\nonumber
\end{align}
where
\begin{equation}
I(t) = e^{i H t} I e^{-i H t}
\label{eq_403bis}
\end{equation}
is the current across the junction. It is expressed in terms of the charge $Q(t)$ to its right as $I(t) =dQ(t)/dt$, whence
\begin{equation}
\label{eq_403ter}
	I = i[H,Q]\,.
\end{equation}
We shall refer to $\chit$ and $\chi$ as the generating functions of
the \textit{first} and of the \textit{second kind}, respectively.\\

\noindent{\bf Results.} The 2nd and 3rd cumulants exhibit asymptotic binomial behavior,
\begin{align}
\label{eq_413}
	\lim\limits_{t\to\infty}\frac{1}{t}\cumul{(\Delta Q)^{2}}
	&= \frac{1}{2 \pi} \integral{\mur}{\mul}{E} T(E)\left( 1-T(E)\right)\,,\\
\label{eq_414}
	\lim\limits_{t\to\infty}\frac{1}{t}\cumul{(\Delta Q)^{3}}
	&= \frac{1}{2 \pi} \integral{\mur}{\mul}{E} T(E)\left( 1-T(E)\right)\left( 1-2\,T(E)\right)\,,
\end{align}
where $T(E)$ is the transparency and $\mu_{\textsc{l},\textsc{r}}$ are the Fermi energies on the left and right leads, in various instances and for either generating function. Specifically:
\begin{itemize}
	\item[-] in the quadratic dispersion model, contact terms matter up to the special case of an ever smoother step of the charge operator $Q(x)$ (see Section~\ref{subsec_quadratic_model}).
	\item[-] in the linear dispersion model, contact terms vanish, except in the special case of instantaneous scattering (see Section~\ref{subsec_linear_model}).
\end{itemize}

In the rest of this section we address the methods used to obtain the results from the generating functions. \\

\noindent{\bf $T^{*}$-ordering.} It is convenient to recall the expansion in contact terms for $\overleftarrow{T}^{*}$-ordered products. With the shorthand notation $A_i=A(t_i)$, $A\equiv I,Q$, Eqs.~(\ref{ect}, \ref{ect1}) read
\begin{align}
\label{eq_405}
	\overleftarrow{T}^{*}(I_1 I_2)
		&= \overleftarrow{T}(I_1 I_2)+\delta(t_1-t_2)[Q_1, I_1]\,,\\
\label{eq_406}
	\overleftarrow{T}^{*}(I_1 I_2 I_3) &= \overleftarrow{T}(I_1 I_2 I_3) + 3\delta(t_2 - t_3) \overleftarrow{T}(I_1 [Q_2, I_2])
	 	+ \delta(t_1 - t_2)\delta(t_2 - t_3)[Q_1,[Q_1,I_1]]\,.
\end{align}
The ordering by $\overrightarrow{T}^{*}$ yields the same expansions, up to a minus sign for contact terms involving an odd number of commutators. A general expression for products of all orders may be found in \cite{graf:09}.\\

\noindent{\bf Cumulants.} Based on the generating function $\chit$ of the first kind we have
\begin{align}
\label{eq_409}
	\cumul{(\Delta Q)^2} &= \integral{0}{t}{^{2}t}\cumul{\overleftarrow{T}^{*}(I_1 I_2)}
		= \integral{0}{t}{^{2}t}\cumul{\overleftarrow{T}(I_1 I_2)} + \integral{0}{t}{t_1} \cumul{[Q_1,I_1]}\,,\\
	\cumul{(\Delta Q)^3} &= \integral{0}{t}{^{3}t}\cumul{\overleftarrow{T}^{*}(I_1 I_2 I_3)}\nonumber\\
\label{eq_410}
		&= \integral{0}{t}{^{3}t}\cumul{\overleftarrow{T}(I_1 I_2 I_3)}
			+3\integral{0}{t}{^{2}t}\cumul{\overleftarrow{T}(I_1[Q_2,I_2])}+\integral{0}{t}{t_1}\cumul{[Q_1,[Q_1,I_1]]}\,,
\end{align}
where $d^nt=dt_1\ldots dt_n$. At first, similar equations are obtained for the moments by means of Eqs.~(\ref{eq_402}) and (\ref{eq_405}, \ref{eq_406}). Moments can then be replaced by cumulants; indeed, their combinatorial relation is universal, and hence the same on both sides of the equations.

Based on the generating function $\chi$ of the second kind we likewise find
\begin{align}
\label{eq_411}
	\cumul{(\Delta Q)^2} &= \frac{1}{4}\integral{0}{t}{^{2}t}[\cumul{\overrightarrow{T}^{*}(I_1 I_2)} + 2\,\cumul{I_1 I_2}
			+  \cumul{\overleftarrow{T}^{*}(I_1 I_2)}]
		= \integral{0}{t}{^{2}t} \cumul{I_1 I_2}\,\\
\label{eq_412}
	\cumul{(\Delta Q)^3} &= \frac{1}{8}\integral{0}{t}{^{3}t}[\cumul{\overrightarrow{T}^{*}(I_1 I_2 I_3)}
			+ 3\, \cumul{\overrightarrow{T}^{*}(I_1 I_2)I_3} + 3\, \cumul{I_1\overleftarrow{T}^{*}(I_2 I_3)}
			+ \cumul{\overleftarrow{T}^{*}(I_1 I_2 I_3)}]\,,
\end{align}
where the 3rd cumulant may also be expanded using (\ref{eq_405}, \ref{eq_406}) for both time arrows. We note that the cumulants of the second kind involve $\overleftarrow{T}^{*}$-ordered current correlators already present in those of the first kind, and more. However, due to the symmetry between usual and reversed time orderings, the contact terms in the 2nd cumulant (\ref{eq_411}) mutually cancel.\\

\noindent{\bf Wick's rule.} The many--particle state $\langle\cdot\rangle$ is the quasi--free state\cite{lundberg:76} determined by the single--particle density matrix $\rho$. Let $\widehat{A}$ be the second quantization of the single--particle operator $A$. Correlators of second quantized operators can be reduced to the level of first quantization thanks to \textit{Wick's rule}. In particular, with $\rho' := 1-\rho$ we have
\begin{align}
\label{eq_414a}
	\cumul{\widehat{A}} &= \langle\widehat{A}\rangle = 0\,, \\
\label{eq_415}
	\cumul{\widehat{A}\widehat{B}} &= \tr(\rho A \rho' B)\,, \\
\label{eq_416}
	\cumul{\widehat{A}\widehat{B}\widehat{C}} &= \tr(\rho A \rho' B \rho' C) - \tr(\rho A \rho' C \rho B)\,.
\end{align}
The expressions follow from the usual formulation of the rule which involves creation and annihilation operators $\psi^*(a)$, $\psi(b)$ of single--particle states $a$, $b$: Expectations of products of such are computed by way of complete contraction schemes and reduced to just two kinds of contractions, $\langle\psi^*(a)\psi(b)\rangle=\langle b|\rho|a\rangle$, $\langle\psi(b)\psi^*(a)\rangle=\langle b|\rho'|a\rangle$, with further ones vanishing. The second quantization $A\mapsto \widehat{A}$ is defined for rank--one operators $A=|a_1\rangle\langle a_2|$ as
\begin{equation*}
\widehat{A}=\psi^*(a_1)\psi(a_2)-\langle a_1|\rho|a_2\rangle\,
\end{equation*}
and then extended by linearity in $A$. We stress the ``zero-point subtraction'' of $\langle a_1|\rho|a_2\rangle=\tr(\rho A)$, which implies $\langle\widehat{A}\rangle = 0$, but drops out from higher cumulants. The l.h.s. of Eq.~(\ref{eq_415}) gives rise to just one non--vanishing connected contraction scheme and thus equals
\begin{equation*}
\cumul{\psi^*(a_1)\psi(a_2)\psi^*(b_1)\psi(b_2)}= \langle\psi^*(a_1)\psi(b_2)\rangle\langle\psi(a_2)\psi^*(b_1)\rangle=\langle b_2|\rho|a_1\rangle\langle a_2|\rho'|b_1\rangle\,,
\end{equation*}
in agreement with the r.h.s..

The charge and current operators mentioned earlier in this section are meant in second quantization. We will henceforth denote them by $\widehat{Q}$ and $\widehat{I}$. \\

\noindent{\bf GNS space and Schwinger terms.} The reader may skip this item. It is in fact about some fine points which remain without practical consequences. Strictly, $\psi^*(a)$, $\psi(b)$ act on the \textit{GNS space} \cite{lundberg:76} of $\langle\cdot\rangle$ but, as we explain below, one may sometimes pretend it is replaced by Fock space. We only consider the case when $\rho$ is a projection. An operator $A$ admits a second quantization $\widehat{A}$ if $B=[\rho, A]$ is Hilbert-Schmidt, i.e. $\tr(B^{*}B)<\infty$. The traces (\ref{eq_415}, \ref{eq_416}) exist if $A$ and the other observables satisfy that condition. The (first quantized) operators $Q$ and $I$ do so in both models of Sect.~\ref{sec_models}. However, $\tr(\rho I)$ is well-defined only in the model with quadratic dispersion, and $\tr(\rho Q)$ in neither.

For $\rho=0$ the stated condition becomes trivial, and the GNS space is the Fock space. If its operators $\widehat{A}$ are used on another quasi-free state $\rho$, then (\ref{eq_415}, \ref{eq_416}) are still valid if the traces exist, but (\ref{eq_414a}) is to be replaced by $\cumul{\widehat{A}} = \langle\widehat{A}\rangle =\tr(\rho A)$. The difference is in the ``zero-point subtraction'' which may diverge, even for $[\rho, A]$ Hilbert-Schmidt. The point of the GNS space is that $\widehat{A}$ still remains defined there.

 On the GNS space we have \cite{lundberg:76}
\begin{equation*}
[\widehat{A}, \widehat{B}] =\widehat{[A,B]} + S(A,B)\cdot 1\,,\qquad S(A,B)= \tr(\rho A \rho' B ) -\tr(\rho B \rho' A )\,,
\label{Sch}
\end{equation*}
where the last term is known as a {\it Schwinger term}. We infer
\begin{gather}\label{Sch1}
\widehat{A}(t)=\widehat{A(t)}+i\int_0^t\mathrm{d}t'\,S(H,A(t'))1\,,\\
\label{Sch2}
\cumul{[\widehat{A}, \widehat{B}]}=S(A,B)\,.
\end{gather}
Upon pretending that the GNS space is just Fock space, we have $[\widehat{A}, \widehat{B}] =\widehat{[A,B]}$ and $\widehat{A}(t)=\widehat{A(t)}$. However, Eq.~(\ref{Sch2}) still holds true, because (\ref{eq_415}) does. The same conclusion is obtained from $\cumul{[\widehat{A}, \widehat{B}]}=\cumul{\widehat{[A,B]}}=\tr(\rho[A,B])$.

Let us comment on the significance of Schwinger terms for the cumulants (\ref{eq_409}, \ref{eq_410}), where $I_i$ is now to be read as $\widehat{I}(t_i)$ (and likewise for $Q_i$). First, it may be replaced by $\widehat{I(t_i)}$, because the difference seen in (\ref{Sch1}) drops out from the results. Second, the contact terms $\cumul{[\widehat{Q(t_1)}, \widehat{I(t_1)}]}$ and $\cumul{[\widehat{Q(t_1)}, \widehat{[Q(t_1),I(t_1)]}]}$ are properly Schwinger terms. Informally however they may be understood in the context of Fock space, as discussed. \\

\noindent{\bf A simple case.} We illustrate the methods by considering a simple example: The 2nd cumulant of the second kind, Eq.~(\ref{eq_411}), for the model with quadratic dispersion and with state (\ref{eq_27}). Traces may be evaluated using the basis (\ref{eq_24}) of LS states. We so obtain
\begin{equation}
	\cumul{\widehat{I}_1\widehat{I}_2} = \tr(\rho I_1 \rho' I_2)
		= \frac{1}{(2\pi)^2} \integ{\mathbb{R}^{2}}{^{2}k} \matel{1}{\rho I_1}{2}\matel{2}{\rho' I_2}{1}
\label{eq_418}
		= \frac{1}{(2\pi)^2} \integ{J\times\mathbb{R}\setminus{J}}{^{2}k} e^{i(E_1-E_2)(t_1-t_2)}\vert\matel{1}{I}{2}\vert^2\,,
\end{equation}
with the shorthand notations $\ket{i}=\ket{\psi_{k_i}}$, $E_i=k_i^2$, and $J=[-\kr, \kl]$. The last equality follows from Eq.~(\ref{eq_403bis}) and the eigenvalue equation $H\ket{i}=E_i\ket{i}$. \\

\noindent{\bf Time integrals.} The long-time limit of (\ref{eq_411}) now calls for
\begin{equation*}
	\lim\limits_{t\to\infty}\frac{1}{t}\integral{0}{t}{^{2}t} e^{i(E_1-E_2)(t_1-t_2)} = 2\pi \delta(E_1-E_2)
		= \frac{2\pi}{2\vert k_1\vert}(\delta(k_1-k_2) + \delta(k_1+k_2))\,.
\end{equation*}
The first equality is by Eq.~(\ref{eq_37}) below and the second by $E_i = k_i^2$. Only one of the diagonals $k_1=\pm k_2$ openly intersects the integration domain $J\times\mathbb{R}\setminus{J}$ in Eq.~(\ref{eq_418}), and in fact just for $k_1=-k_2\in [\kr,\kl]$, see Fig.~\ref{fig_2}. Hence
\begin{figure}[hbtp]
\centering
\input{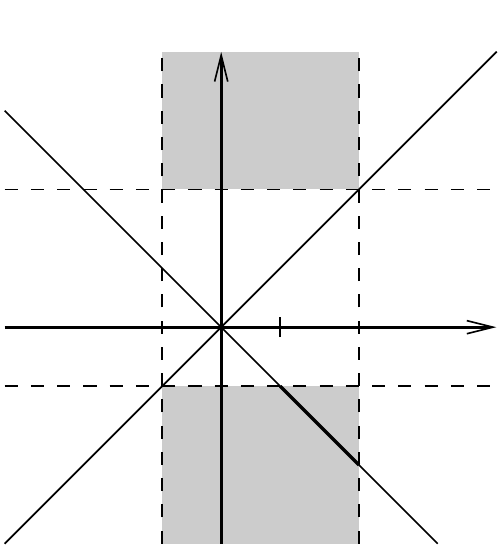_t}
\captiontitle{Integration over $k_1$, $k_2$ in Eq.~(\ref{eq_418})}{The integration domain $J\times\mathbb{R}\setminus{J}$ (shaded) and its intersection with $E_1=E_2$ (diagonals).}
\label{fig_2}
\end{figure}

\begin{equation}
\label{eq_421}
	\lim\limits_{t\to\infty}\frac{1}{t}\cumul{(\Delta Q)^2}
		=\frac{1}{2\pi}\integral{\kr}{\kl}{k_1}\frac{\vert\matel{1}{I}{-1}\vert^2}{2k_1}\,,
\end{equation}
where $\ket{-1}:=\ket{\psi_{-k_1}}$. Appendix~\ref{sec_temp_distr} collects further long--time limits of distributions.\\

\noindent{\bf Matrix elements of current.} The current $I$ is given in Eq.~(\ref{eq_924}). We compute the relevant matrix element for $k_1>0$ and observe that Eq.~(\ref{eq_21}) distinguishes between the cases $\pm k>0$; however only the expressions for $x>x_0$ matter here because of the support properties of $Q'$ seen in (\ref{eq_25}):
\begin{align}
	\matel{1}{I}{-1} &= \matel{1}{pQ'(x)+Q'(x)p}{-1}\nonumber \\
		&= \integral{-\infty}{\infty}{x} \overline{t(k_1)} e^{-ik_1 x}(pQ'(x)+Q'(x)p)(e^{-ik_1 x}+r(-k_1)e^{ik_1 x})\nonumber\\
		&= 2k_1\overline{t(k_1)}r(-k_1)\integral{-\infty}{\infty}{x}Q'(x)= 2k_1\overline{t(k_1)}r(-k_1)\,,
\label{eq_422}
\end{align}
where the third equality is by partial integration. Hence
\begin{equation}
\label{eq_423}
	\vert\matel{1}{I}{-1}\vert^2 = (2k_1)^2 T(k_1^2 ) R(k_1^2 ) = (2k_1)^2 T(k_1^2 )(1-T(k_1^2 ))\,,
\end{equation}
and Eq.~(\ref{eq_413}) follows by substituting $E=k_1^2$ in Eq. (\ref{eq_421}). In the simple case considered we thus confirm the binomial statistics.

Further computations of matrix elements of current may be found in Appendix \ref{sec_matrix_elements}. In particular, we mention
\begin{equation}
\label{eq_424}
	\matel{1}{I}{1} = 2k_1 T(k_1)\,,\qquad(k_1>0)\,.
\end{equation}

\noindent{\bf Matrix elements of charge.} The cumulant (\ref{eq_411}) we just computed is the simplest among (\ref{eq_409}-\ref{eq_412}) in that it does not involve the charge operator $Q$. In preparation of the other cases, it pays to look at the relation between matrix elements of $Q$ and of $I$, still though within the model of quadratic dispersion. In view of the support property (\ref{eq_25}) of $Q(x)$, computing its Fourier transform demands a regularization at $x\to+\infty$:
\begin{equation}
\label{eq_425}
	\widehat{Q}(k) = \lim\limits_{\varepsilon\downarrow 0}\integral{-\infty}{\infty}{x}Q(x)e^{-i(k-i\varepsilon)x}
		=(-i)\lim\limits_{\epsilon\downarrow 0}\frac{\widehat{Q'}(k)}{k-i\,\varepsilon}
		= (-i)\frac{\widehat{Q'}(k)}{k-i\,0}\,.
\end{equation}
The result is a distribution in $k$, and so is $\matel{1}{Q}{2}$ in $k_1,k_2$; whereas $\matel{1}{I}{2}$ is a smooth function of these variables. By (\ref{eq_403ter}) we have the equation
\begin{equation}
\label{eq_425a}
\matel{1}{I}{2}=i(E_1-E_2)\matel{1}{Q}{2}\,,
\end{equation}
which however can not be uniquely solved for $\matel{1}{Q}{2}$, because the distributional equation $xF(x)=0$ admits the non--trivial solutions $F(x)\propto\delta(x)$. In fact, in view of the Sokhatsky-Weierstrass (SW) formula
\begin{equation}\label{ws}
\frac{1}{x-i\,0}-\frac{1}{x+i\,0}=2\pi i \delta(x)\,,
\end{equation}
the general solution is
\begin{equation}
	\label{eq_427}
		i\matel{1}{Q}{2} = \frac{\matel{1}{I}{2}^{(+)}}{E_1-E_2+i\,0} + \frac{\matel{1}{I}{2}^{(-)}}
		{E_1-E_2-i\,0}\,,
\end{equation}
where $\matel{1}{I}{2} = \matel{1}{I}{2}^{(+)}+\matel{1}{I}{2}^{(-)}$ is any split of the matrix element of current. It takes $\matel{1}{Q}{2}$ to make it unique, at least up to terms vanishing for $E_1=E_2$, which may still be shifted between the two terms $\matel{1}{I}{2}^{(\pm)}$.

In summary: An expression for $\matel{1}{I}{2}$ does not entail one for $\matel{1}{Q}{2}$; rather conversely, including the split. Such expressions will be derived in Appendix~\ref{sec_matrix_elements}. An important case is when $k_1\neq k_2$, whence $E_1=E_2$ arises by $k_1=- k_2$; then $\matel{1}{I}{2}^{(-)}=\matel{1}{I}{2}$ and Eq.~(\ref{eq_427}) simply reads
\begin{equation}
\label{eq_507}
	\matel{1}{Q}{2} = (-i)\frac{\matel{1}{I}{2}}{E_1-E_2-i\,0}\,.
\end{equation}
Another case is
\begin{equation}
\label{eq_431}
	\matel{-1}{I}{-1}^{(-)} = 2k_1 R(k_1)\,,\qquad(k_1>0)\,.
\end{equation}
\comment{
\begin{itemize}
\item $k_1,k_2>0$. We have
\begin{gather*}
	i\matel{1}{Q}{2} = i\overline{t(k_1)}t(k_2)\widehat{Q}(k_1-k_2)
		= \overline{t(k_1)}t(k_2)\frac{\widehat{Q'}(k_1-k_2)}{k_1-k_2-i\,0}\,,\\
(k_1+k_2)(k_1-k_2-i\,0)=E_1-E_2-i\,0\,,
\end{gather*}
and hence $\matel{1}{I}{2}^{(-)}=\matel{1}{I}{2}$, $\matel{1}{I}{2}^{(+)}=0$. The same conclusion holds when $k_1$ and $k_2$ have opposite signs, or more generally when $k_1$ and $k_2$ are in disjoint intervals. In fact, this latter case reduces to the previous one, since $E_1-E_2=0$ may only occur when $k_1$ and $k_2$ have opposite signs.
\item $k_1,k_2<0$. A straightforward computation yields
\begin{align*}
	i\matel{1}{Q}{2}
	=\frac{\widehat{Q'}(k_1-k_2)}{k_1-k_2 - i\,0}
		+\overline{r(k_1)}\frac{\widehat{Q'}(-k_1-k_2)}{-k_1-k_2- i\,0}
	 +r(k_2)\frac{\widehat{Q'}(k_1+k_2)}{k_1+k_2- i\,0}+
\overline{r(k_1)}r(k_2)\frac{\widehat{Q'}(k_2-k_1)}{k_2-k_1 - i\,0}\,.
\end{align*}
The two middle terms have non--vanishing denominators in the stated range and may thus be linked to either term $\matel{1}{I}{2}^{(\pm)}$. Using $(k_1+k_2)(k_1-k_2\mp i\,0)=E_1-E_2\pm i\,0$ we may choose to set
\begin{equation}
	\label{eq_430}
	\matel{1}{I}{2}^{(-)} = -(k_1+k_2)\overline{r(k_1)}r(k_2)\widehat{Q'}(k_2-k_1)\,.
\end{equation}
\end{itemize}
}

\section{Derivations}
\label{sec_derivations}

\subsection{The quadratic dispersion model}
\label{subsec_derivations_quadratic}

Using the methods introduced in the previous section we shall derive the asymptotic binomial distribution (\ref{eq_413}, \ref{eq_414}) for both generating functions, $\chit$ and $\chi$. We shall do so first for the model with quadratic dispersion relation of Section~\ref{subsec_quadratic_model}. It will become evident that contact terms are crucial. In other words Matthews' time--ordering can not be replaced by ordinary time--ordering, except in limiting cases, if the correct result is to be found. Two such cases, namely that of a large detector and of a linear dispersion, will be given independent treatments in the following sections.\\

\noindent{\bf 2nd cumulant of the first kind.} The cumulant is given in Eq.~(\ref{eq_409}) as
\begin{equation*}
	\cumul{(\Delta Q)^2} = \integral{0}{t}{^{2}t}\cumul{\overleftarrow{T}^{*}(\widehat{I}_1 \widehat{I}_2)}
		= \mathrm{A} + \mathrm{B}
\end{equation*}
with
\begin{align*}
	\mathrm{A} &= \integral{0}{t}{^{2}t}\cumul{\overleftarrow{T}(\widehat{I}_1 \widehat{I}_2)}=2\iintegral{0}{t}{t_1}{0}{t_1}{t_2} \cumul{\widehat{I}_1 \widehat{I}_2} &\textnormal{(main term)}\,,\\
	\mathrm{B} &= \integral{0}{t}{t_1}\cumul{[\widehat{Q}_1,\widehat{I}_1]} &\textnormal{(contact term)}\,.
\end{align*}
The connected correlators are computed by Wick's rule (\ref{eq_415}) and the resulting traces evaluated in the basis of LS states (\ref{eq_21}). We obtain
\begin{align}
\label{eq_504}
	\mathrm{A} &= \frac{2}{(2\pi)^2} \integ{J\times\mathbb{R}\setminus{J}}{^{2}k} \iintegral{0}{t}{t_1}{0}{t_1}{t_2}
		e^{i(E_1-E_2)(t_1-t_2)}\vert\matel{1}{I}{2}\vert^2\,,\\
\label{eq_505}
	\mathrm{B} &= \frac{t}{(2\pi)^2} \integ{J\times\mathbb{R}\setminus{J}}{^{2}k}
		(\matel{1}{Q}{2}\matel{2}{I}{1}-\matel{1}{I}{2}\matel{2}{Q}{1})\,.
\end{align}
In relation to the overview above we shall next (i) discuss time integrals and (ii) express matrix elements of charge in terms of those of current. We will do likewise later for all cumulants. In the present case the first item concerns only the main term, the second only the contact term.

(i) The asymptotic long--time behavior of the main term is given by Eq.~(\ref{eq_36}) with $x=E_1-E_2$ and the substitution $t_2\mapsto t_1-t_2$:
\begin{equation}
\label{eq_506}
	\frac{1}{t}\iintegral{0}{t}{t_1}{0}{t_1}{t_2} e^{i(E_1-E_2)(t_1-t_2)} \;\xrightarrow[t\to+\infty]{}\;
	\frac{i}{E_1-E_2+i\,0}\,,
\end{equation}
as distributions in $k_1$ and $k_2$. In Eq.~(\ref{eq_504}) $\matel{1}{I}{2}$ then qualifies as a test function, being essentially the Fourier transform of the  compactly supported function $Q'(x)$.

(ii) Within the integration domain (\ref{eq_505}) $\matel{1}{Q}{2}$ is given by Eq.~(\ref{eq_507}), and $\matel{2}{Q}{1}$ is then obtained by exchanging $k_1$ and $k_2$, or by complex conjugation.

Collecting terms we so obtain
\begin{align*}
	\lim\limits_{t\to+\infty}\frac{1}{t}\cumul{(\Delta Q)^2}
		&= \frac{1}{(2\pi)^2} \integ{J\times\mathbb{R}\setminus{J}}{^{2}k}
			(\frac{2i}{E_1-E_2+i\,0}-\frac{i}{E_1-E_2-i\,0}+\frac{i}{E_2-E_1-i\,0})\vert\matel{1}{I}{2}\vert^2\nonumber\\
		&= \frac{1}{2\pi}\integral{k_\textsc{r}}{k_\textsc{l}}{k_1}\frac{\vert\matel{1}{I}{-1}\vert^2}{2k_1}
		= \frac{1}{2\pi}\integral{\mu_\textsc{r}}{\mu_\textsc{l}}{E} T(E)(1-T(E))\,,
\end{align*}
as claimed. The second equality is by the SW formula (\ref{ws}) and the earlier remark restricting $k_1$ to $[\kr,\kl]$ (see Fig.~\ref{fig_2}); the last one by Eq.~(\ref{eq_423}) with $E=k_1^2$.\\

\noindent{\bf 3rd cumulant of the first kind.} Though longer, the computation of the 3rd cumulant retains the same two key ingredients: (i) the evaluation of time integrals and (ii) the expression of matrix elements of charge in terms of those of current. By Eq.~(\ref{eq_410}) we have
\begin{equation*}
	\cumul{(\Delta Q)^3} = \integral{0}{t}{^{3}t}\cumul{\overleftarrow{T}^{*}(\widehat{I}_1 \widehat{I}_2 \widehat{I}_3)}
		= \mathrm{A}+\mathrm{B}+\mathrm{C}
\end{equation*}
with
\begin{align*}
	\mathrm{A} &= \integral{0}{t}{^{3}t}\cumul{\overleftarrow{T}(\widehat{I}_1 \widehat{I}_2 \widehat{I}_3)}=6\iiintegral{0}{t}{t_{1}}{0}{t_{1}}{t_{2}}{0}{t_{2}}{t_{3}}\cumul{\widehat{I}_1 \widehat{I}_2 \widehat{I}_3}
		&\textnormal{(main term)}\,,\\
	\mathrm{B} &= 3\integral{0}{t}{^{2}t}\cumul{\overleftarrow{T}(\widehat{I}_1[\widehat{Q}_2,\widehat{I}_2])}=3\iintegral{0}{t}{t_1}{0}{t_1}{t_2} \cumul{
\widehat{I}_1[\widehat{Q}_2,\widehat{I}_2]+[\widehat{Q}_1,\widehat{I}_1]\widehat{I}_2}
		&\textnormal{(1st contact term)}\,,\\
	\mathrm{C} &= \integral{0}{t}{t_1}\cumul{[\widehat{Q}_1,[\widehat{Q}_1,\widehat{I}_1]]} &\textnormal{(2nd contact term)}\,.
\end{align*}
Here Wick's rule (\ref{eq_416}) is appropriate, whence each term splits into two.\\

\noindent{\bf (a) Main term.} (i) We get
\begin{align*}
	\cumul{\widehat{I}_1\widehat{I}_2\widehat{I}_3} =&\, \tr(\rho I_1\rho' I_2\rho' I_3)-\tr(\rho I_1\rho' I_3\rho I_2)\nonumber\\
		=&\, \frac{1}{(2\pi)^3}\integ{J\times\mathbb{R}\setminus{J}\times\mathbb{R}\setminus{J}}{^{3}k}
			e^{i(E_1-E_2)t_1}e^{i(E_2-E_3)t_2}e^{i(E_3-E_1)t_3}\matel{1}{I}{2}\matel{2}{I}{3}\matel{3}{I}{1}\nonumber\\
		&\,-\frac{1}{(2\pi)^3}\integ{J\times\mathbb{R}\setminus{J}\times J}{^{3}k}
			e^{i(E_1-E_2)t_1}e^{i(E_2-E_3)t_3}e^{i(E_3-E_1)t_2}\matel{1}{I}{2}\matel{2}{I}{3}\matel{3}{I}{1}\,.
\end{align*}
The exponentials of the second term are obtained from those of the first one by exchanging $E_3-E_1$ by $E_2-E_3$, which leaves the sum $E_2-E_1$ unaffected. This symbolic {\it transformation rule} may be deduced from Wick's rule (\ref{eq_416}). For possibly distinct observables and for the terms as a whole it reads:
\begin{itemize}
	\item[-] exchange $\matel{3}{\cdot}{1}$ by $\matel{2}{\cdot}{3}$;
	\item[-] exchange $E_3-E_1$ by $E_2-E_3$;
	\item[-] in the integration domain, replace $k_3\in \mathbb{R}\setminus{J}$ by $k_3\in J$;
    \item[-] apply an overall minus sign.
\end{itemize}
In the present case the first item leaves the integrand unchanged. We shall denote the transformation by $\mathcal{T}_{(23)}$, as it essentially arises by exchanging positions $2$ and $3$ in the product of operators.

The time integrals are given by Eq. (\ref{eq_38}) with $x=E_1-E_2$ and $y=E_2-E_3$ resp. $y=E_3-E_1$. It yields
\begin{equation*}
	\lim\limits_{t\to+\infty}\frac{1}{t}
		\integral{0}{t}{^{3}t}\cumul{\overleftarrow{T}(\widehat{I}_1 \widehat{I}_2 \widehat{I}_3)}
		= \mathrm{A_I}+\mathrm{A_{II}}
\end{equation*}
with
\begin{equation}\label{eq_515}
\begin{aligned}
	\mathrm{A_I} &= -\frac{6}{(2\pi)^3}\integ{J\times\mathbb{R}\setminus{J}\times\mathbb{R}\setminus{J}}{^{3}k}
		\frac{\matel{1}{I}{2}\matel{2}{I}{3}\matel{3}{I}{1}}{(E_1-E_2+i\,0)(E_1-E_3+i\,0)}\,,\\
	\mathrm{A_{II}} &= \mathcal{T}_{(23)}[\mathrm{A_I}]
		= \frac{6}{(2\pi)^3}\integ{J\times\mathbb{R}\setminus{J}\times J}{^{3}k}
		\frac{\matel{1}{I}{2}\matel{2}{I}{3}\matel{3}{I}{1}}{(E_1-E_2+i\,0)(E_3-E_2+i\,0)}\,.
\end{aligned}
\end{equation}

\noindent{\bf(b) Contact terms.} (i) The long--time behavior of the 1st contact term is given by Eq.~(\ref{eq_36}), while the integrand of the 2nd is time--independent. Expanding the commutators we obtain
\begin{align*}
	\lim\limits_{t\to+\infty}\frac{3}{t}\,
		\integral{0}{t}{^{2}t}\cumul{\overleftarrow{T}(\widehat{I}_1[\widehat{Q}_2,\widehat{I}_2])}
		&= \mathrm{B_I}+\mathrm{B_{II}}\,,\\
	\lim\limits_{t\to+\infty}\frac{1}{t}
		\integral{0}{t}{t_1}\cumul{[\widehat{Q}_1,[\widehat{Q}_1,\widehat{I}_1]]}
		&= \mathrm{C_I}+\mathrm{C_{II}}\,,
\end{align*}
with
\begin{multline*}
	\mathrm{B_I} = \frac{3}{(2\pi)^3} \integ{J\times\mathbb{R}\setminus{J}\times\mathbb{R}\setminus{J}}{^{3}k} i
		\left(\frac{\matel{1}{I}{2}\matel{2}{Q}{3}\matel{3}{I}{1}-\matel{1}{I}{2}\matel{2}{I}{3}\matel{3}{Q}{1}}{E_1-E_2+i\,0}
		\right.\\
                \left.
		+\frac{\matel{1}{Q}{2}\matel{2}{I}{3}\matel{3}{I}{1}-\matel{1}{I}{2}\matel{2}{Q}{3}\matel{3}{I}{1}}{E_1-E_3+i\,0}	\right)\,,
\end{multline*}
\begin{equation*}
	\mathrm{C_I} = \frac{1}{(2\pi)^3} \integ{J\times\mathbb{R}\setminus{J}\times\mathbb{R}\setminus{J}}{^{3}k}
		(\matel{1}{Q}{2}\matel{2}{Q}{3}\matel{3}{I}{1}-2\matel{1}{Q}{2}\matel{2}{I}{3}\matel{3}{Q}{1}
		+\matel{1}{I}{2}\matel{2}{Q}{3}\matel{3}{Q}{1})\,.
\end{equation*}
and $\mathrm{B_{II}}=\mathcal{T}_{(23)}[\mathrm{B_I}]$, $\mathrm{C_{II}}=\mathcal{T}_{(23)}[\mathrm{C_I}]$.\\

(ii) We distinguish between matrix elements $\matel{i}{Q}{j}$ as to whether $k_i$ and $k_j$ belong to the same or to different sets among $J$ and $\mathbb{R}\setminus{J}$. In the first instance Eq.~(\ref{eq_507}) applies, whereas in the second its generalization (\ref{eq_427}) is required. For example the first two terms in the integrand of $\mathrm{B_I}$ become
\begin{align*}
	\sum\limits_{s=\pm}\left(\frac{\matel{1}{I}{2}\matel{2}{I}{3}^{(s)}\matel{3}{I}{1}}{(E_1-E_2+i\,0)(E_2-E_3+s\,i\,0)}
	-\frac{\matel{1}{I}{2}\matel{2}{I}{3}^{(s)}\matel{3}{I}{1}}{(E_1-E_2+i\,0)(E_3-E_1-i\,0)}\right)\,,
\end{align*}
where in the second term we used the identity $\matel{2}{I}{3} = \matel{2}{I}{3}^{(+)}+\matel{2}{I}{3}^{(-)}$. In view of the integration domains the splitting will more generally affects $\matel{2}{I}{3}$ in $\mathrm{\alpha_I}$ and $\matel{3}{I}{1}$ in $\mathrm{\alpha_{II}}$, ($\mathrm{\alpha}=\mathrm{A,B,C}$). As a result each term $\mathrm{\alpha_{I,II}}$ may be written as
\begin{align*}
	\mathrm{\alpha_I} &= \frac{1}{(2\pi)^3} \integ{J\times\mathbb{R}\setminus{J}\times\mathbb{R}\setminus{J}}{^{3}k}
		\sum\limits_{s=\pm}\mathrm{\alpha_I}^{(s)}(E_3-E_1, E_2-E_3)\matel{1}{I}{2}\matel{2}{I}{3}^{(s)}\matel{3}{I}{1}\,,\\
	\mathrm{\alpha_{II}} &= \frac{1}{(2\pi)^3} \integ{J\times\mathbb{R}\setminus{J}\times J}{^{3}k}
		\sum\limits_{s=\pm}\mathrm{\alpha_{II}}^{(s)}(E_3-E_1, E_2-E_3)\matel{1}{I}{2}\matel{2}{I}{3}\matel{3}{I}{1}^{(s)}
\end{align*}
for some distributions $\mathrm{\alpha_{I,II}}^{(\pm)}$. We remark that $\mathrm{\alpha_{II}}^{(\pm)}(E_3-E_1, E_2-E_3)=-\mathrm{\alpha_I}^{(\pm)}(E_2-E_3, E_3-E_1)$, because the matrix elements carrying the superscript $(s)$ in the two cases are also those exchanged by the transformation rule $\mathcal{T}_{(23)}$. Moreover, the dependence of $\mathrm{\alpha_{I}}^{(s)}$ on $s$ is of the form
\begin{equation}\label{form1}
	\mathrm{\alpha_I}^{(\pm)}(E_3-E_1,E_2-E_3)
		= \frac{\widehat{\mathrm{\alpha}}_\mathrm{I}(E_3-E_1,E_2-E_3)}{E_1-E_3+i\,0}
		+ \frac{\widecheck{\mathrm{\alpha}}_\mathrm{I}(E_3-E_1,E_2-E_3)}{E_2-E_3\pm i\,0}\,,
\end{equation}
where  $\widehat{\mathrm{\alpha}}_\mathrm{I}$ and $\widecheck{\mathrm{\alpha}}_\mathrm{I}$ ($\mathrm{\alpha =A,B,C}$) are as follows:
\begin{align}
	\widehat{\mathrm{A}}_\mathrm{I}&=-\frac{6}{E_1-E_2+i\,0}\,, &\widecheck{\mathrm{A}}_\mathrm{I} &= 0\,,\nonumber\\
	\widehat{\mathrm{B}}_\mathrm{I}&=\frac{3}{E_1-E_2+i\,0}+\frac{3}{E_1-E_2-i\,0}\,, &\quad
		\widecheck{\mathrm{B}}_\mathrm{I} &= \frac{3}{E_1-E_2+i\,0}-\frac{3}{E_1-E_3+i\,0}\,,\label{tab}\\
	\widehat{\mathrm{C}}_\mathrm{I}&=-\frac{2}{E_1-E_2-i\,0}\,,&
		\widecheck{\mathrm{C}}_\mathrm{I} &= -\frac{1}{E_1-E_2-i\,0}+\frac{1}{E_1-E_3+i\,0}\,.\nonumber
\end{align}
The claim we are heading to is
\begin{equation}
\label{eq_519}
	\mathrm{A_I}+\mathrm{B_I}+\mathrm{C_I}=\frac{1}{2 \pi} \integral{\mur}{\mul}{E} T(E)R(E)^2\,,
	\qquad \mathrm{A_{II}}+\mathrm{B_{II}}+\mathrm{C_{II}}=-\frac{1}{2 \pi} \integral{\mur}{\mul}{E} T(E)^2 R(E)\,,
\end{equation}
which leads to binomial statistics (\ref{eq_414}) in view of $TR^2-T^2 R = T(1-T)(1-2T)$. To establish it, we observe that the sum,
\begin{equation}
\label{eq_520}
	\mathrm{A_I}+\mathrm{B_I}+\mathrm{C_I}
		= \frac{1}{(2\pi)^3} \integ{J\times\mathbb{R}\setminus{J}\times\mathbb{R}\setminus{J}}{^{3}k}
		\sum\limits_{s=\pm}\mathrm{\Delta_I}^{(s)}\matel{1}{I}{2}\matel{2}{I}{3}^{(s)}\matel{3}{I}{1}\,,
\end{equation}
likewise involves distributions $\mathrm{\Delta_I}^{(\pm)}$ of the form (\ref{form1}) with
\begin{equation*}
\widehat{\mathrm{\Delta}}_\mathrm{I}=2\pi i\delta(E_1-E_2)-\frac{2}{E_1-E_2+i\,0}\,,\qquad
\widecheck{\mathrm{\Delta}}_\mathrm{I}=-2\pi i\delta(E_1-E_2)+\frac{2(E_2-E_3)}{(E_1-E_2+i\,0)(E_1-E_3+i\,0)}\,.
\end{equation*}
This is seen by summing terms within the columns of table (\ref{tab}) and by using the SW formula (\ref{ws}). The second term of $\widecheck{\mathrm{\Delta}}_\mathrm{I}$ is a distribution with poles at $E_2-i\,0$, $E_3-i\,0$ not pinching the $E_1$--axis. It thus vanishes to first order at $E_2=E_3$ and cancels in $\mathrm{\Delta_I}^{(\pm)}$ against the second term of $\widehat{\mathrm{\Delta}}_\mathrm{I}$. We are thus left with
\begin{align}\label{eq_521}
			\mathrm{\Delta_I}^{(\pm)}
= 2\pi i\,\delta(E_1-E_2)\left(\frac{1}{E_2-E_3+i\,0}-\frac{1}{E_2-E_3\pm i\,0}\right)\,,
\end{align}
and we conclude that $\mathrm{\Delta_I}^{(+)}=0$ and $\mathrm{\Delta_I}^{(-)}=(2\pi)^2\delta(E_1-E_2)\delta(E_2-E_3)$. The conditions $E_1=E_2$ and $E_2=E_3$ are satisfied along the diagonals $k_1=\pm k_2$ resp.~$k_2=\pm k_3$. That happens jointly and within the integration domain only for $k_1=-k_2=-k_3$ with $k_1\in [\kr,\kl]$, whence
\begin{equation*}
	\mathrm{A_I}+\mathrm{B_I}+\mathrm{C_I}
		= \frac{1}{2\pi}\integral{\kr}{\kl}{k_1}\frac{\vert\matel{1}{I}{-1}\vert^2\matel{-1}{I}{-1}^{(-)}}{(2k_1)^2}
		= \frac{1}{2 \pi} \integral{\mur}{\mul}{E} T(E)R(E)^2\,,
\end{equation*}
as claimed. The last equality is by Eqs.~(\ref{eq_423}, \ref{eq_431}) with $E=k_1^2$.

Similarly, by $\mathrm{\Delta_{II}}^{(\pm)}(E_3-E_1,E_2-E_3)=-\mathrm{\Delta_I^{(\pm)}}(E_2-E_3,E_3-E_1)$, we obtain $\mathrm{\Delta_{II}}^{(+)}=0$ and $\mathrm{\Delta_{II}}^{(-)}=-(2\pi)^2\delta(E_1-E_2)\delta(E_1-E_3)$. Hence
\begin{equation*}
	\mathrm{A_{II}}+\mathrm{B_{II}}+\mathrm{C_{II}}
		= -\frac{1}{2\pi}\integral{\kr}{\kl}{k_1}\frac{\vert\matel{1}{I}{-1}\vert^2\matel{1}{I}{1}^{(-)}}{(2k_1)^2}
		= -\frac{1}{2 \pi} \integral{\mur}{\mul}{E} T(E)^2 R(E)\,,
\end{equation*}
where the last equality follows by Eqs.~(\ref{eq_423}, \ref{eq_424}) with $E=k_1^2$.\\

\noindent{\bf 3rd cumulant of the second kind.} The cumulant is given in Eq.~(\ref{eq_412}) as
\begin{equation}
\label{eq_524}
	\cumul{(\Delta Q)^3} = \frac{1}{8}\integral{0}{t}{^{3}t}\bigl(
		\cumul{\overrightarrow{T}^{*}(\widehat{I}_1 \widehat{I}_2 \widehat{I}_3)}
		+ 3\, \cumul{\overrightarrow{T}^{*}(\widehat{I}_1 \widehat{I}_2)\widehat{I}_3}
		+ 3\, \cumul{\widehat{I}_1\overleftarrow{T}^{*}(\widehat{I}_2 \widehat{I}_3)}
		+ \cumul{\overleftarrow{T}^{*}(\widehat{I}_1 \widehat{I}_2 \widehat{I}_3)}\bigr)\,.
\end{equation}
A preliminary observation is useful. By $\overline{\langle\widehat{A}\rangle}=\langle\widehat{A}^*\rangle$ we have $\overline{\langle\widehat{A}\widehat{B}\rangle}=\langle\widehat{B}^*\widehat{A}^*\rangle$, $\overline{\langle\overleftarrow{T}(\widehat{A}(t_1)\widehat{B}(t_2))\rangle}=\langle\overrightarrow{T}(\widehat{A}(t_1)^*\widehat{B}(t_2)^*)\rangle$, and likewise for higher products, $T^*$--ordered products, and cumulants. Given that in the above expression the currents are self-adjoint and the $t_i$'s dummy variables, the two extreme and the two middle terms are so related. The missing item for establishing binomial statistics here is thus just
\begin{equation}
\label{eq_525}
	\lim\limits_{t\to +\infty}\frac{1}{t}\integral{0}{t}{^{3}t}
	\cumul{\widehat{I}_1\overleftarrow{T}^{*}(\widehat{I}_2 \widehat{I}_3)}
	=\frac{1}{2 \pi} \integral{\mur}{\mul}{E} T(E)\left( 1-T(E)\right)\left( 1-2T(E)\right)\,.
\end{equation}
The computation is similar to that of the 3rd cumulant of the first kind, whence we refer to it for more details. By Eq.~(\ref{eq_405}) we have
\begin{equation*}
	\integral{0}{t}{^{3}t}\cumul{\widehat{I}_1\overleftarrow{T}^{*}(\widehat{I}_2 \widehat{I}_3)} = \mathrm{D}+\mathrm{E}
\end{equation*}
with
\begin{equation}\label{eq_527}
\begin{aligned}
	\mathrm{D} &= \integral{0}{t}{^{3}t}\cumul{\widehat{I}_1\overleftarrow{T}(\widehat{I}_2 \widehat{I}_3)}
		&\textnormal{(main term)}\,,\\
	\mathrm{E} &= \integral{0}{t}{^{2}t}\cumul{\widehat{I}_1 [\widehat{Q}_2,\widehat{I}_2])}
		&\textnormal{(contact term)}\,.
\end{aligned}
\end{equation}
We apply Wick's rule (\ref{eq_416}) to both terms. The long--time behavior of the main term is extracted by Eq.~(\ref{eq_44}); that of the contact term by Eq.~(\ref{eq_37}). Expanding the commutators we so obtain
\begin{align*}
	\lim\limits_{t\to+\infty}\frac{1}{t}
		\integral{0}{t}{^{3}t}\cumul{\widehat{I}_1\overleftarrow{T}(\widehat{I}_2 \widehat{I}_3)}
		&= \mathrm{D_I}+\mathrm{D_{II}}\,,\\
	\lim\limits_{t\to+\infty}\frac{1}{t}
		\integral{0}{t}{^{2}t}\cumul{\widehat{I}_1 [\widehat{Q}_2,\widehat{I}_2])} &=\mathrm{E_I}+\mathrm{E_{II}}\,,
\end{align*}
with
\begin{align}
\label{eq_531}
	\mathrm{D_I} &= \frac{2}{(2\pi)^3}\integ{J\times\mathbb{R}\setminus{J}\times\mathbb{R}\setminus{J}}{^{3}k}
		2\pi\,i\,\delta(E_1-E_2)\frac{\matel{1}{I}{2}\matel{2}{I}{3}\matel{3}{I}{1}}{E_2-E_3+i\,0}\,,\\
\nonumber
	\mathrm{E_I} &= \frac{1}{(2\pi)^3}\integ{J\times\mathbb{R}\setminus{J}\times\mathbb{R}\setminus{J}}{^{3}k}
		2\pi\delta(E_1-E_2)(\matel{1}{I}{2}\matel{2}{Q}{3}\matel{3}{I}{1}-\matel{1}{I}{2}\matel{2}{I}{3}\matel{3}{Q}{1})\,.
\end{align}
$\mathrm{D_{II}}$ is obtained from $\mathrm{D_I}$ by the rule $\mathcal{T}_{(23)}$ introduced in relation with the previous cumulant. Likewise for $\mathrm{E_{II}}$ and $\mathrm{E_I}$. In analogy with the claim (\ref{eq_519}) made there, the present one is
\begin{equation}
\label{eq_532}
	\mathrm{D_I}+\mathrm{E_I}=\frac{1}{2 \pi} \integral{\mur}{\mul}{E} T(E)R(E)^2\,,
	\qquad \mathrm{D_{II}}+\mathrm{E_{II}}=-\frac{1}{2 \pi} \integral{\mur}{\mul}{E} T(E)^2 R(E)\,.
\end{equation}
By the same steps as those leading to Eq.~(\ref{eq_520}), we find
\begin{equation*}
	\mathrm{D_I}+\mathrm{E_I} = \frac{1}{(2\pi)^3} \integ{J\times\mathbb{R}\setminus{J}\times\mathbb{R}\setminus{J}}{^{3}k}
		\sum\limits_{s=\pm}\mathrm{\Gamma_I}^{(s)}(k_1,k_2,k_3)\matel{1}{I}{2}\matel{2}{I}{3}^{(s)}\matel{3}{I}{1}\,,
\end{equation*}
with the distributions
\begin{equation*}
	\mathrm{\Gamma_I}^{(\pm)} = 2\pi i\,\delta(E_1-E_2)\left(\frac{1}{E_2-E_3+i\,0}-\frac{1}{E_2-E_3\pm i\,0}\right)
		=\mathrm{\Delta_I}^{(\pm)}\,.
\end{equation*}
Hence this case reduces to the 3rd cumulant of the first kind (\ref{eq_521}), which establishes the claims~(\ref{eq_532}).

\subsection{The limit of a large detector}
\label{subsec_smeared_proj}

We will consider the situation of a detector extending over a region much larger than the Fermi wavelength. Clearly, the binomial distribution persists, this situation being a limiting case of the one dealt with before. The point though to be made is (i) that the contact terms in Eqs.~(\ref{eq_409}-\ref{eq_412}) vanish in the limit. Put differently, Matthews' time--ordered correlators reduce to ordinary time--ordered correlators, which alone account for the binomial distribution. Moreover, (ii) we provide an independent derivation of that latter fact. We shall analyze the cumulants separately, though in very similar manners. The values of some integrals used along the way are collected at the end of the section.

The large detector is modeled by means of scaling. Let $Q_0(x)$ be a fixed function satisfying (\ref{eq_25}). We choose the profile of the detector to be given by the function
\begin{equation*}
Q(x)= Q_{0}(x/l)\,,
\end{equation*}
which for $l\ge 1$ retains that property, and consider it in the limit $l\to\infty$. The scaling implies
\begin{align}
\label{eq_141}
	Q'(x) = l^{-1} Q_{0}'(x/l)\,,\qquad
	\widehat{Q'}(k) = \widehat{Q_{0}'}(lk)\,,\qquad
	\widehat{Q'^{2}}(k)=l^{-1}\widehat{(Q_{0}')^{2}}(lk)\, .
\end{align}
\noindent{\bf 2nd cumulant of the first kind.}
(i) We first show that the contact term vanishes in the limit $l\to\infty$:
\begin{equation*}
	\lim\limits_{t\to\infty}\frac{1}{t}\integral{0}{t}{t_1} \cumul{[\widehat{Q}_1,\widehat{I}_1]}\;\xrightarrow[l\to+\infty]{}\;0\,.
\end{equation*}
The limit $t\to\infty$ is superfluous, since $\cumul{[\widehat{Q}_1,\widehat{I}_1]}$ is independent of $t_1$ and in fact by (\ref{eq_925}) equal to
\begin{equation*}
\cumul{\widehat{[Q,I]}}=2i\tr(\rho Q'(x)^{2})=\frac{2i}{2\pi}\integral{-\kr}{\kl}{k_1}\matel{1}{Q'^2}{1}\,.
\end{equation*}
Using (\ref{eq_B2}, \ref{eq_B3}) for $Q'^2$ instead of $Q$, the matrix element is seen to be a linear combination of $\widehat{Q'^2}(k)$ for $k=0,\pm 2k_1$. They are of order $O(l^{-1})$ by (\ref{eq_141}), proving the first claim.\\

(ii) Let us now come to the main term:
\begin{equation}
\label{eq_149}
	\lim\limits_{t\to\infty}\frac{1}{t}\integral{0}{t}{^{2}t}\cumul{\overleftarrow{T}(\widehat{I}_1 \widehat{I}_2)} = \frac{2i}{(2 \pi)^{2}}
		\integ{J\times\mathbb{R}\setminus{J}}{^{2}k} \frac{\vert\matel{1}{I}{2}\vert^{2}}{E_{1}-E_{2}+i\,0}	\;\xrightarrow[l\to+\infty]{}\;
\frac{1}{2 \pi} \integral{\mur}{\mul}{E} T(E)\,(1-T(E))
\,.
\end{equation}
The equality was shown in (\ref{eq_504}, \ref{eq_506}), whereas the limit is the second claim being made here.

The regularization $+i\,0$ of the denominator only matters when $E_1=E_2$, i.e. on the diagonals $k_1=\pm k_2$, and, once restricted to the integration domain, only for $k_1=-k_2\in [\kr,\kl]$ (see Fig.~\ref{fig_2}). Moreover, the matrix element $\matel{1}{I}{2}$ is a linear combination of $\widehat{Q'}(\pm k_1 \pm k_2)$. In the limit $l\to\infty$ their supports concentrate by Eq.~(\ref{eq_141}) near the same diagonals, which by the same token get restricted to part of just one. This allows to:
\begin{itemize}
\item[-] Use the factorization $E_{1}-E_{2}+i\,0 = (k_{1}-k_{2})(k_{1}+k_{2}+i\,0)$, as appropriate for $k_1>0$, $k_2<0$.
\item[-] Select the corresponding expression (\ref{me+-}) for $\matel{i}{I}{j}$ from Appendix~\ref{sec_matrix_elements}; and therein neglect any terms vanishing on that diagonal. Hence, effectively,
\begin{align}
\label{eq_149a}
\matel{1}{I}{2}=(k_1-k_2)\overline{t(k_1)}r(k_2)\widehat{Q'}(k_1+k_2)\,.
\end{align}
\end{itemize}
The integrand of Eq.~(\ref{eq_149}) so becomes
\begin{equation*}
		(k_{1}-k_{2})T(E_{1}) R(E_{2})\frac{|\widehat{Q'}(k_1+k_2)|^2}{k_{1}+k_{2}+i\,0}\,.
\end{equation*}
The last factor depends on $l$ in the way seen in Eq.~(\ref{eq_41}) for $\rho(x)=|\widehat{Q_{0}'}(x)|^2=\widehat{Q_{0}'}(-x)\widehat{Q_{0}'}(x)$. It can thus be replaced in the limit by $C_{-} \delta(k_{1}+k_{2})$, where
\begin{equation}
\label{eq_620}
C_{-} =\integral{}{}{u} \frac{\widehat{Q_{0}'}(-v)\widehat{Q_{0}'}(v)}{v+i\,0}\,.
 \end{equation}
Accepting for now that $C_{-}=-i\pi$, we obtain the limit (\ref{eq_149}) by means of $2k_1\mathrm{d}k_1=\mathrm{d}E_1$ and by the earlier remark restricting  $k_1$ to $[\kr,\kl]$.\\

\noindent{\bf 3rd cumulant of the first kind.}
(i) We first show that the contact terms vanish in the limit $l\to\infty$:
\begin{equation*}
	\lim\limits_{t\to\infty}\frac{1}{t}\integral{0}{t}{^{2}t}
		\cumul{\overleftarrow{T}(\widehat{I}_1[\widehat{Q}_2,\widehat{I}_2])}\;\xrightarrow[l\to+\infty]{}\;0\,,
		\qquad \lim\limits_{t\to\infty}\frac{1}{t}\integral{0}{t}{t_1}\cumul{[\widehat{Q}_1,[\widehat{Q}_1,\widehat{I}_1]]}
		\;\xrightarrow[l\to+\infty]{}\;0\,.
\end{equation*}
The limit $t\to\infty$ is superfluous in the second claim, since the integrand is time--independent. It actually vanishes even at finite $l$ because
\begin{equation*}
	\cumul{[\widehat{Q},[\widehat{Q},\widehat{I}]]} = \cumul{[\widehat{Q},\widehat{[Q,I]}]} = 2i\,\tr(\rho[Q,Q'^2])=0\,.
\end{equation*}
The second equality is by Eq.~(\ref{eq_925}) and the last one by the vanishing commutator. Turning to the 1st contact term, it is convenient to use the identity $\cumul{\widehat{I}_1[\widehat{Q}_2,\widehat{I}_2]} =\cumul{\widehat{I}_1\widehat{[Q_2,I_2]}}$. By Wick's rule (\ref{eq_415}) and Eq.~(\ref{eq_925}) it may then be recast as
\begin{equation*}
	\lim\limits_{t\to\infty}\frac{1}{t}\integral{0}{t}{^{2}t}
		\cumul{\overleftarrow{T}(\widehat{I}_1\widehat{[Q_2,I_2]})}
		= -\frac{2}{(2\pi)^2}\integ{J\times\mathbb{R}\setminus{J}}{^{2}k}
		\frac{\matel{1}{I}{2}\matel{2}{Q'^2}{1}+\matel{1}{Q'^2}{2}\matel{2}{I}{1}}{E_1-E_2+i\,0}\,.
\end{equation*}
By the results of Appendix \ref{sec_matrix_elements}, the numerator is a linear combination of $\widehat{Q'^2}(\pm k_1\pm k_2)\widehat{Q'}(\pm k_1\pm k_2)$ with various sign combinations. They are of order $O(l^{-1})$ by (\ref{eq_141}), proving the second claim.\\

(ii) Let us now come to the main term. We showed in Eqs.~(\ref{eq_515}) that
\begin{equation*}
\lim\limits_{t\to\infty}\frac{1}{t}\integral{0}{t}{^{3}t}\cumul{\overleftarrow{T}(\widehat{I}_1 \widehat{I}_2 \widehat{I}_3)}
=\mathrm{A_I}+\mathrm{A_{II}}\,,
\end{equation*}
with
\begin{align*}
\mathrm{A_I}&=-\frac{6}{(2\pi)^{3}}\integ{J\times\mathbb{R}\setminus J\times\mathbb{R}\setminus J}{^{3}k}
		\frac{\matel{1}{I}{2}\matel{2}{I}{3}\matel{3}{I}{1}}{(E_1-E_2+i\,0)(E_1-E_3+i\,0)}\,,\\
\mathrm{A_{II}}&=\frac{6}{(2\pi)^{3}}\integ{J\times\mathbb{R}\setminus J\times J}{^{3}k}
\frac{\matel{1}{I}{2}\matel{2}{I}{3}\matel{3}{I}{1}}{(E_1-E_2+i\,0)(E_3-E_2+i\,0)}\,.	
\end{align*}
The claim is now
\begin{equation}
\label{eq_156a}
\mathrm{A_I}	\;\xrightarrow[l\to+\infty]{}\;
\frac{1}{2 \pi} \integral{\mur}{\mul}{E} T(E)R(E)^2\,,\qquad \mathrm{A_{II}}	\;\xrightarrow[l\to+\infty]{}\;
-\frac{1}{2 \pi} \integral{\mur}{\mul}{E}T(E)^2R(E)\,.
\end{equation}
It independently confirms binomial statistics, in view of $TR^2-T^2R=T(1-T)(1-2T)$.

The computation is similar to that of the 2nd cumulant, whence we refer to the discussion following (\ref{eq_149}) for more details. We first discuss term $\mathrm{A_I}$. The regularization of the denominator only matters when $E_1=E_2$ or $E_1=E_3$ and, once the integration domain is taken into account, only for $k_1 = -k_2$ or $k_1 = -k_3$, both along $k_1\in [\kr,\kl]$. A matrix element $\matel{i}{I}{j}$, $(i\neq j)$ concentrates near the planes $k_i=\pm k_j$ as $l\to+\infty$; and their product near the intersection: $k_1 = -k_2 = -k_3$ with $k_1\in [\kr,\kl]$. This allows to:
\begin{itemize}
\item[-] Use the factorizations $E_1-E_j+i\,0=(k_1-k_j)(k_1+k_j+i\,0)$, $(j=2,3)$, as appropriate for $k_1>0$, $k_2,k_3<0$.
\item[-] Select the appropriate expressions for $\matel{i}{I}{j}$ and simplify them as done for Eq.~(\ref{eq_149a}). Hence, we also have from (\ref{me--}, \ref{me-+})
\begin{align}
\label{eq_616}
	\matel{2}{I}{3}&=(k_2+k_3)\bigl( \widehat{Q'}(k_2-k_3)-\overline{r(k_2)}r(k_3)\widehat{Q'}(k_3-k_2)\bigr)\,,\\
\label{eq_617}
	\matel{3}{I}{1}&=(k_1-k_3)t(k_1)\overline{r(k_3)}\widehat{Q'}(-k_3-k_1)\;.
\end{align}
\end{itemize}
The integrand of $\mathrm{A_I}$ is thus recast as
\begin{equation*}
	(k_2+k_3)\,T(E_1)\frac{\widehat{Q'}(k_1+k_2)\,\widehat{Q'}(-k_1-k_3)}{(k_1+k_2+i\,0)(k_1+k_3+i\,0)}\,
		\bigl( r(k_2)\,\overline{r(k_3)}\,\widehat{Q'}(k_2-k_3)-R(E_2)\,R(E_3)\,\widehat{Q'}(k_3-k_2)\bigr)\,.
\end{equation*}
As mentioned, the expression depends on $l$ through $\widehat{Q'}(k) = \widehat{Q_{0}'}(lk)$; moreover, it consists of two terms, to each of which Eq.~(\ref{eq_42}) is applicable by the following observation. Each term contains a product of distributions in the variables $x=k_1+k_2$, $y=k_1+k_3$, and $x-y=k_2-k_3$. The distributions correspond to
\begin{equation*}
\rho_1(x)=\widehat{Q_{0}'}(x)\,\qquad\rho_2(x)=\widehat{Q_{0}'}(-x)\,\qquad\rho_3(x)=\widehat{Q_{0}'}(\pm x)\,,
\end{equation*}
where the $\pm$ refers to the first, resp. second term. In the limit $l\to+\infty$, the integrand thus reduces to
\begin{equation}
\label{eq_604}
	(k_2+k_3)\,T(E_1)\bigl( \widetilde{C}_+r(k_2)\,\overline{r(k_3)}- \widetilde{C}_-R(E_2)\,R(E_3)\bigr)
\delta(k_1+k_2)\delta(k_1+k_3)\,,
\end{equation}
where
\begin{equation*}
\widetilde{C}_\pm=\int \mathrm{d}u \mathrm{d}v
\frac{\widehat{Q_{0}'}(u)\widehat{Q_{0}'}(-v)\widehat{Q_{0}'}(\pm(u-v))}{(u+i\,0)(v+i\,0)}\,.
\end{equation*}
Accepting for now that $\widetilde{C}_+=0$ and $\widetilde{C}_{-}=-(2\pi)^2/6$, we obtain the first limit (\ref{eq_156a}) by means of $2k_1\mathrm{d}k_1=\mathrm{d}E_1$ and by the earlier remark restricting $k_1$ to $[\kr,\kl]$.\\

We now turn to $\mathrm{A_{II}}$. In view of the integration domain the integrand of $\mathrm{A_{II}}$ is supported in the limit $l\to+\infty$ near the segment $k_1=-k_2=k_3$ with $k_1\in[\kr,\kl]$.
We may thus:
\begin{itemize}
	\item[-] Use the factorizations $E_j-E_2+i\,0=(k_j-k_2)(k_j+k_2+i\,0)$ ($j=1,3$) as appropriate for $k_1,k_3>0$ and $k_2<0$.
	\item[-] Select and simplify the relevant matrix elements of current $\matel{i}{I}{j}$ as was done for Eq.~(\ref{eq_149a}). Hence we also have from (\ref{me-+}, \ref{eq_B4}):
	\begin{align}
	\label{eq_624}
		\matel{2}{I}{3} &= (k_3-k_2)t(k_3)\overline{r(k_2)}\widehat{Q'}(-k_2-k_3)\\
	\label{eq_625}
		\matel{3}{I}{1} &= (k_3+k_1)\overline{t(k_3)}t(k_1)\widehat{Q'}(k_3-k_1)\,.
	\end{align}
\end{itemize}
The integrand of $\mathrm{A_{II}}$ then reduces to
\begin{equation*}
	(k_1+k_3)T(E_1)T(E_3)R(E_2)
	\frac{\widehat{Q'}(k_1+k_2)\widehat{Q'}(-k_2-k_3)\widehat{Q'}(k_3-k_1)}{(k_1+k_2+i\,0)(k_2+k_3+i\,0)}\,.
\end{equation*}
In view of $\widehat{Q'}(k)=\widehat{Q_0'}(lk)$ Eq.~(\ref{eq_42}) may be applied with $x=k_1+k_2$, $y=k_2+k_3$ and $x-y=k_1-k_3$. Comparing with the derivation of the integrand (\ref{eq_604}) of $\mathrm{A_I}$, that of $\mathrm{A_{II}}$ reduces in the limit to
\begin{equation*}
	\widetilde{C}_{-} (k_1+k_3)T(E_1)T(E_3)R(E_2)\delta(k_1+k_2)\delta(k_2+k_3)\,.
\end{equation*}
Now the second claim (\ref{eq_156a}) follows just as the first one did from (\ref{eq_604}).\\

\noindent{\bf 3rd cumulant of the second kind.} By the earlier observation following (\ref{eq_524}) we only need to investigate
\begin{equation*}
	\integral{0}{t}{^{3}t}\cumul{\widehat{I}_1\overleftarrow{T}^{*}(\widehat{I}_2 \widehat{I}_3)}
		= \integral{0}{t}{^{3}t}\cumul{\widehat{I}_1\overleftarrow{T}(\widehat{I}_2 \widehat{I}_3)}
		+\integral{0}{t}{^{2}t}\cumul{\widehat{I}_1[\widehat{Q}_2,\widehat{I}_2]}\,.
\end{equation*}

(i) We first show that the contact term vanishes as $l\to\infty$. Using $\cumul{\widehat{I}_1[\widehat{Q}_2,\widehat{I}_2]} =\cumul{\widehat{I}_1\widehat{[Q_2,I_2]}}$, the time integral is given by Eq.~(\ref{eq_37}). Hence, by Eq.~(\ref{eq_925}), we have
\begin{equation*}
	\lim\limits_{t\to+\infty}\frac{1}{t}\integral{0}{t}{^{2}t}\cumul{\widehat{I}_1 \widehat{[Q_2,I_2]})}
		= \frac{2i}{2\pi}\integ{J\times\mathbb{R}\setminus{J}}{^{2}k}\delta(E_1-E_2)\matel{1}{I}{2}\matel{2}{Q'^2}{1}\,.
\end{equation*}
According to Eq.~(\ref{me+-}) and Eq.~(\ref{meq+-}) with $Q'^2$ substituted for $Q$, the integrand is a linear combination of $\widehat{Q'^2}(\pm k_1\pm k_2)\widehat{Q'}(\pm k_1\pm k_2)$ with various sign combinations. By (\ref{eq_141}) they are of order $O(l^{-1})$, proving the claim.\\

(ii) We now come to the main term. We showed
\begin{equation*}
	\lim\limits_{t\to +\infty}\frac{1}{t}\integral{0}{t}{^{3}t}
	\cumul{\widehat{I}_1\overleftarrow{T}(\widehat{I}_2 \widehat{I}_3)}
	= \mathrm{D_I}+\mathrm{D_{II}}\,,
\end{equation*}
with
\begin{align*}
	\mathrm{D_I} &= \frac{2i}{(2\pi)^2}\int_{\kr}^{\kl}\frac{\mathrm{d}k_1}{2k_1}\integ{\mathbb{R}\setminus{J}}{k_3}
		\frac{\matel{1}{I}{-1}\matel{-1}{I}{3}\matel{3}{I}{1}}{E_1-E_3+i\,0}\,,\\
	\mathrm{D_{II}} &= -\frac{2i}{(2\pi)^2}\int_{\kr}^{\kl}\frac{\mathrm{d}k_1}{2k_1}\integ{J}{k_3}
		\frac{\matel{1}{I}{-1}\matel{-1}{I}{3}\matel{3}{I}{1}}{E_3-E_1+i\,0}\,,
\end{align*}
where we evaluated the delta distributions in Eq.~(\ref{eq_531}). The claim now is
\begin{equation}
\label{eq_615}
	\mathrm{D_I}\;\xrightarrow[l\to+\infty]{}\;\frac{1}{2 \pi} \integral{\mur}{\mul}{E} T(E)R(E)^2\,,\qquad
	\mathrm{D_{II}}	\;\xrightarrow[l\to+\infty]{}\;-\frac{1}{2 \pi} \integral{\mur}{\mul}{E}T(E)^2R(E)\,,
\end{equation}
which is sufficient in view of $TR^2-T^2R=T(1-T)(1-2T)$. We first consider $\mathrm{D_I}$. Taking the integration domain into account, the support of its integrand concentrates in the limit near the diagonal $k_1=-k_3$ for $k_1\in [\kr,\kl]$. This allows to:
\begin{itemize}
	\item[-] Use the factorization $E_1-E_3+i\,0=(k_1-k_3)(k_1+k_3+i\,0)$ as appropriate for $k_1>0$ and $k_3<0$.
	\item[-] Select the appropriate matrix elements of currents and simplify them as was done in Eq.~(\ref{eq_617}); moreover, substituting $-k_1$ for $k_2$ in Eq.~(\ref{eq_616}) we obtain
	\begin{equation*}
		\matel{-1}{I}{3} = (k_3-k_1)\bigl( \widehat{Q'}(-k_1-k_3)-\overline{r(k_1)}r(k_3)\widehat{Q'}(k_1+k_3)\bigr)\,;
	\end{equation*}
	$\matel{1}{I}{-1}$ is given by Eq.~(\ref{eq_422}).
\end{itemize}
The integrand of $\mathrm{D_I}$ may thus be recast as
\begin{equation*}
	(k_3-k_1)\,T(E_1)\frac{\widehat{Q'}(-k_1-k_3)}{(k_1+k_3+i\,0)}\,
		\bigl( r(k_1)\,\overline{r(k_3)}\,\widehat{Q'}(-k_1-k_3)-R(E_1)\,R(E_3)\,\widehat{Q'}(k_1+k_3)\bigr)\,.
\end{equation*}
As mentioned, the expression depends on $l$ through $\widehat{Q'}(k)=\widehat{Q_{0}'}(lk)$, whence Eq.~(\ref{eq_41}) may be applied with $x=k_1+k_3$. In the limit $l\to+\infty$ the above thus reduces to
\begin{equation*}
	(k_3-k_1)\delta(k_1+k_3)\,
		\bigl( C_+\,r(k_1)\,\overline{r(k_3)}- C_{-}\,R(E_1)\,R(E_3))\,,
\end{equation*}
with the constants $C_{-}$ and $C_+$ given by Eq.~(\ref{eq_620}) resp. by
\begin{equation*}
	C_+ =\integral{}{}{u} \frac{(\widehat{Q_{0}'}(-v))^2}{v+i\,0}\,.
\end{equation*}
Accepting on top of $C_{-}=-i\pi$ that $C_+=0$, the first limit (\ref{eq_615}) follows by $2k_1\mathrm{d}k_1=\mathrm{d}E$.\\

We now turn to $\mathrm{D_{II}}$. Here the support of the integrand will concentrate near $k_1=k_3$ for $k_1\in [\kr,\kl]$ as $l\to+\infty$. The denominator therefore factorizes as $E_3-E_1+i\,0=(k_1+k_3)(k_3-k_1+i\,0)$, as appropriate for $k_1,k_3>0$; the relevant matrix elements are given by Eqs.~(\ref{eq_422}, \ref{eq_625}) and Eq.~(\ref{eq_624}) with $-k_1$ substituted for $k_2$. The integrand of $\mathrm{D_{II}}$ thus becomes
\begin{equation*}
	(k_1+k_3)T(E_1)T(E_3)R(E_1)
	\frac{\widehat{Q'}(k_1-k_3)\,\widehat{Q'}(k_3-k_1)}{(k_3-k_1+i\,0)}\,.
\end{equation*}
By Eq.~(\ref{eq_41}) it reduces in the limit to $C_{-} \delta(k_1-k_3)(k_1+k_3)T(E_1)^2R(E_1)$ , which establishes the second limit (\ref{eq_615}) by the aforementioned identity $C_{-}=-i\pi$.\\

To close this section, we compute the four integrals encountered along the way. Let us generalize $C_\pm$ to $C_\pm=C_\pm(0)$ where
\begin{equation*}
  C_\pm(u)=\integral{}{}{v}\frac{\widehat{Q_{0}'}(-v)\widehat{Q_{0}'}(\pm (u-v))}{v+i\,0}\,,
\end{equation*}
for which we claim
\begin{equation*}
C_+(u)=0\,,\qquad
C_{-}(u)=-2\pi i \widehat{Q_{0}Q_{0}'}(-u)\,.
\end{equation*}
Indeed, by Eq.~(\ref{eq_425}) and Parseval's identity we have
 \begin{equation*}
  C_\pm(u)=(-i)\integral{}{}{v}\widehat{Q_{0}}(-v)\widehat{Q_{0}'}(\pm (u-v))=
-2\pi i \integral{}{}{x}Q_{0}(x)Q_{0}'(\mp x)e^{\mp iux}\,.
\end{equation*}
The first result follows by the support properties of $Q_{0}$; the second is now read off and its special case $C_{-}(0)=-i\pi$ is by
$\integral{}{}{x}Q_{0}(x)Q_{0}'(x)=1/2$. Let us now come to
\begin{equation*}
\widetilde{C}_\pm =\integral{}{}{u}\frac{\widehat{Q_{0}'}(u) C_\pm(u)}{u+i\,0}\,.
\end{equation*}
Clearly $\widetilde{C}_+=0$. By Eq.~(\ref{eq_425}) and the SW formula (\ref{ws}) we have
\begin{align*}
\widetilde{C}_{-} =\integral{}{}{u} \bigl(\frac{1}{u-i\,0}-2\pi i\,\delta(u)\bigr)
		\widehat{Q_{0}'}(u)C_{-}(u)
   =2\pi\integral{}{}{u}\widehat{Q_{0}}(u)\widehat{Q_{0}Q_{0}'}(-u)-\frac{(2\pi)^2}{2}\widehat{Q_{0}'}(0)
=-\frac{(2\pi)^2}{6}\,,
\end{align*}
where the last equality is by Parseval's identity followed by $\integral{}{}{x}Q_{0}^2(x)Q_{0}'(x)=1/3$, as well as by $\widehat{Q_{0}'}(0)=\integral{}{}{x}Q_{0}'(x)=1$.

\subsection{The case of a linear dispersion relation}
\label{subsec_limit_linear}

We shall consider the limiting case of a linear dispersion relation described in Sect.~\ref{subsec_linear_model}. It arises in the limit $\lambda\to 0$ of vanishing Fermi wavelength and thus ought to retain the binomial character of the transport statistics. In the model the scatterer and the detector are pointlike objects on the real line. We treat the cases where they are separated by $l>0$, respectively coincident ($l=0$). In the following we briefly review the results obtained in \cite{graf:09} for the 3rd cumulant of the first kind, before extending them to the 3rd cumulant of the second kind. \\

\noindent{\bf 3rd cumulant of the first kind.} It was shown in~\cite{graf:09} that the 3rd cumulant of the first kind (\ref{eq_410}) exhibits binomial statistics,
\begin{equation*}
	\lim\limits_{t\to +\infty}\frac{1}{t}\integral{0}{t}{^3 t}
	\cumul{\overleftarrow{T}^{*}(\widehat{I}_1\widehat{I}_2\widehat{I}_3)} = \frac{V}{2\pi}T(1-T)(1-2T)\,,
\end{equation*}
for both coincident and non-coincident positions of scatterer and detector. This is the counterpart of Eq.~(\ref{eq_414}) for an energy--independent transparency $T(E)\equiv T$ and $V=\mu_\textsc{l}-\mu_\textsc{r}$. As mentioned in the introduction, contact terms make a non--vanishing contribution in the case $l=0$. More precisely, we have
\begin{align*}
	\lim\limits_{t\to +\infty}\frac{1}{t}\integral{0}{t}{^{3}t}
	\cumul{\overleftarrow{T}(\widehat{I}_1 \widehat{I}_2 \widehat{I}_3)}&=\frac{V}{2\pi}(-2T^2)(1-T)
		&\textnormal{(main term)}\,,\nonumber\\
	\lim\limits_{t\to +\infty}\frac{3}{t}\,\integral{0}{t}{^{2}t}
	\cumul{\overleftarrow{T}(\widehat{I}_1[\widehat{Q}_2,\widehat{I}_2])}&=0
		&\textnormal{(1st contact term)}\,,\nonumber\\
	\lim\limits_{t\to +\infty}\frac{1}{t}
	\integral{0}{t}{t_1}\cumul{[\widehat{Q}_1,[\widehat{Q}_1,\widehat{I}_1]]}&=\frac{V}{2\pi}T(1-T)
	 &\textnormal{(2nd contact term)}\,.
\end{align*}
In the second case, the main term alone contributes. In both cases a noteworthy feature is that the contribution of the main term remains unchanged if the time ordering is reduced as in $\widehat{I}_1\overleftarrow{T}(\widehat{I}_2 \widehat{I}_3)$ or omitted altogether.\\

\noindent{\bf 3rd cumulant of the second kind.} As explained in connection with Eq.~(\ref{eq_525}) the only missing item in order to establish binomial statistics is
\begin{equation*}
	\lim\limits_{t\to +\infty}\frac{1}{t}\integral{0}{t}{^{3}t}
	\cumul{\widehat{I}_1\overleftarrow{T}^{*}(\widehat{I}_2 \widehat{I}_3)}
	=\frac{V}{2\pi}T(1-T)(1-2T)\,.
\end{equation*}
Once again, the cumulant consists of a main and a contact term, see Eqs.~(\ref{eq_527}). The contribution of the main term is known, being insensitive to time ordering; hence we merely need to show that the contact term yields the complementary contribution. We distinguish between two cases:\\

(i) {\it Coincident positions, $l=0$:} From the computation of the 1st contact term in \cite{graf:09} we infer
\begin{align*}
	\lim\limits_{t\to +\infty}\frac{1}{t}\integral{0}{t}{^{2}t}
	\cumul{\widehat{I}_1 [\widehat{Q}_2,\widehat{I}_2])}
	&=-\frac{2i}{(2\pi)^2}T(1-T)\integral{}{}{x}\frac{\sin(Vx)}{(x-i\,0)^2}
	=\frac{V}{2\pi}T(1-T)\,,
\end{align*}
as claimed. In the second equality we used the fact that, as distributions, the odd part of $(x-i\,0)^{-2}$ is $((x-i\,0)^{-2}-(x+i\,0)^{-2})/2 = \pm i\pi\delta'(x)$.

(ii) {\it Separated positions, $l>0$:} As mentioned in Sect.~\ref{subsec_linear_model}, any contact term vanishes as a consequence of $[Q_l,I_l]=0$.

\section{Comparison between different approaches}
\label{comp}

There are ways and means to compute cumulants of transported charge, and in particular those of \cite{lesovik:03}, which at first sight differ from the ones used here on various counts. The purpose of this section is to show that they nevertheless are fundamentally the same.

A first difference rests on the use of Eq.~(\ref{eq_911}), as opposed to Eq.~(\ref{eq_919}) as above. The link is provided by identity~(\ref{eq_922}), which deserves proof anyway. We recall once more that its two sides are to be understood as power series in $\lambda$ with the time ordering placed inside the multiple integrals. The l.h.s. equals
	\begin{align*}
e^{i Ht}e^{-iH(\lambda)t}&=e^{i \lambda Q(t)}e^{-i \lambda Q}=\overleftarrow{T}\exp\bigl(i(Q(t)-Q)\bigr)
=\sum_{n=0}^\infty\frac{(i\lambda)^n}{n!}\overleftarrow{T}(Q(t)-Q))^n\,.
\end{align*}
In order to end up with things placed as stated, we apply the fundamental theorem of calculus to $\overleftarrow{T}(Q(t)-Q))^n$, rather than to $(Q(t)-Q))^n$:
\begin{align*}
\overleftarrow{T}(Q(t)-Q))^n&= \integral{0}{t}{t_1}
\frac{\partial}{\partial t_1}\left.\overleftarrow{T}\bigl((Q(t_1)-Q)\ldots
(Q(t_n)-Q)\bigr)\right|_{t_2 = \ldots = t_n = t}\\
&= \int_{0}^{t}\mathrm{d}t_1\ldots \mathrm{d}t_n\,\frac{\partial}{\partial t_n}\ldots\frac{\partial}{\partial t_1}\overleftarrow{T}\bigl((Q(t_1)-Q)\cdots (Q(t_n)-Q)\bigr)\,.
\end{align*}
We then expand the correlator in $Q(t_i)$ and $-Q$; the resulting second term is independent of $t_i$ and does not contribute to the derivative. By~(\ref{eq_921}) this proves~(\ref{eq_922}). A further proof of that identity, also given in \cite{graf:09}, is by comparing its two sides at each order $\lambda^n$. On the l.h.s. it is to be noted that
\begin{equation}\label{HI}
\HI(\lambda)=- \lambda I - i \frac{\lambda^{2}}{2}[Q,I] + O(\lambda^{3})
\end{equation}
is not homogeneous in $\lambda$; whereas the r.h.s. is expanded into $\overleftarrow{T}$-ordered products plus contact terms. It is instructive to check the case $n=2$. Up to a common factor $-\lambda^2/2$ the two sides are
\begin{equation*}
\integraal{0}{t}{t_1}{t_2}\overleftarrow{T}\bigl(I(t_1)I(t_2)\bigr)+\integral{0}{t}{t_1}[Q(t_1),I(t_1)]\,,\qquad
\integraal{0}{t}{t_1}{t_2}\overleftarrow{T}^{*}(I(t_1)I(t_2))\,;
\end{equation*}
by (\ref{ect}) they agree.

A further point deserving attention is as follows. For the model with quadratic dispersion the expansion (\ref{HI}) terminates at order $\lambda^{2}$ and reads in first quantization
\begin{equation*}
\HI(\lambda)=- \lambda (p Q'(x) + Q'(x) p) +\lambda^{2}Q'(x)^2\,,
\end{equation*}
see Eqs.~(\ref{eq_924}, \ref{eq_925}); and with
$\widehat{A}=\int\mathrm{d}x\mathrm{d}x'\widehat{\psi}(x)^*A(x,x')\widehat{\psi}(x')$ in second quantization
%
\begin{align}
\widehat{\HI}(\lambda)&=- \lambda \widehat{I} - i \frac{\lambda^{2}}{2}[\widehat{Q},\widehat{I}]\label{hatHI1}\\
&= \int \mathrm{d}x\bigl(- \lambda\widehat{\jmath}(x)Q'(x)+\lambda^2\widehat{\rho}(x)Q'(x)^2\bigr)\,,\label{hatHI2}
\end{align}
where charge and current densities are expressed in terms of fermionic operators $\widehat{\psi}(x)$ as
\begin{equation}\label{rhoj}
\widehat{\rho}(x)=\widehat{\psi}(x)^*\widehat{\psi}(x)\,,\qquad
\widehat{\jmath}(x)=-i\bigl(\widehat{\psi}(x)^*\widehat{\psi}'(x)-\widehat{\psi}'(x)^*\widehat{\psi}(x)\bigr)\,.
\end{equation}
Eq.~(\ref{hatHI2}) follows from (\ref{hatHI1}) by the commutation relation $[\widehat{Q},\widehat{I}]=\widehat{[Q,I]}$, but may also been obtained from $\widehat{H}(\lambda)$ as a starting point. Computations like those seen in Sect.~\ref{sec_derivations} can also be performed on the basis of Eqs.~(\ref{eq_911}, \ref{hatHI2}). However, a fact which was crucial there, but may escape notice here, is that the term of order $\lambda^2$ in (\ref{hatHI2}) is a commutator, as seen in (\ref{hatHI1}). The commutation relation, which states
\begin{equation*}
i\bigl[\int\mathrm{d}x\widehat{\rho}(x)Q(x), \int\mathrm{d}x'\widehat{\jmath}(x')Q'(x')\bigr]=-\int\mathrm{d}x\widehat{\rho}(x)Q'(x)^2\,,
\end{equation*}
has a seemingly independent derivation by means of
\begin{equation*}
i[\widehat{\rho}(x),\widehat{\jmath}(x')]=-\delta'(x'-x)\widehat{\rho}(x')\,,
\end{equation*}
which in turn follows from (\ref{rhoj}) and from
$[A^*B, C^*D]=A^*\{B, C^*\}D-C^*\{A^*,D\}B$ for annihilation operators $A$ through $D$.

\appendix
\appendixpage
\addappheadtotoc

\section{Some limits of distributions}
\label{sec_temp_distr}

In this appendix we collect some limits of distributions which are used in the main text. We consider distributions in one or two real variables, $x$ and $y$, and parametrized by $t>0$ (or $l>0$). In particular, we compute their limits as $t\to +\infty$. We recall the meaning of convergence for distributions: $D_t(x)\to D(x)$, $(t\to+\infty)$ stands for
\begin{equation*}
\lim_{t\to+\infty}\integral{\mathbb{R}}{}{x}D_t(x)\phi(x)
			=\integral{\mathbb{R}}{}{x}D(x)\phi(x)
\end{equation*}
for any test function $\phi$; say $\phi \in \mathcal{S}(\mathbb{R})$, the Schwartz space \cite{schwartz:66}.

\begin{proposition} We have
\label{prop_1}
\begin{align}
	\label{eq_32}
\integral{0}{t}{t_1}e^{ixt_1}=\frac{e^{ixt}-1}{ix}&\;\xrightarrow[t\to+\infty]{}\;\frac{i}{x+i\,0}\,,\\	
\label{eq_36}
\frac{1}{t}\iintegral{0}{t}{t_1}{0}{t_1}{t_2} e^{ixt_2}
=  \frac{1-e^{ixt}+ixt}{tx^{2}}&
		 \;\xrightarrow[t\to+\infty]{}\;  \frac{i}{x+i\,0} \,,\\
\label{eq_37}
\frac{1}{t}\iintegral{0}{t}{t_1}{0}{t}{t_2} e^{ix(t_1-t_2)}& \;\xrightarrow[t\to+\infty]{}\;2\pi \delta(x) \,,\\
\label{eq_38}
 \frac{1}{t}\iiintegral{0}{t}{t_{1}}{0}{t_{1}}{t_{2}}{0}{t_{2}}{t_{3}}
		e^{ixt_{1}} e^{iyt_{2}} e^{-i(x+y)t_{3}}& \;\xrightarrow[t\to+\infty]{}\; - \frac{1}{(x+i\,0)(x+y+i\,0)}\,,\\
\label{eq_44}
	\frac{1}{t} \left(\integral{0}{t}{t_1} e^{ixt_1}\right)
				\left(\iintegral{0}{t}{t_2}{0}{t_2}{t_3} e^{iy t_2}e^{-i(x+y)t_3}\right) &\;\xrightarrow[t\to+\infty]{}\; 2\pi i\delta(x)\frac{1}{y+i\,0}	
\,.
	\end{align}
\end{proposition}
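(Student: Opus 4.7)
All five identities descend from a single prototype, namely (\ref{eq_32}) itself. I would prove it by pairing with $\phi \in \mathcal{S}(\mathbb{R})$: using $(e^{ixt}-1)/(ix) = \int_0^t e^{ixs}\mathrm{d}s$ and Fubini one has $\int\phi(x)(e^{ixt}-1)/(ix)\,\mathrm{d}x = \int_0^t\tilde\phi(s)\,\mathrm{d}s$ with $\tilde\phi(s) = \int e^{ixs}\phi(x)\,\mathrm{d}x \in \mathcal{S}$; letting $t\to\infty$ gives $\int_0^\infty\tilde\phi(s)\,\mathrm{d}s$, which is precisely the pairing $\langle i/(x+i\,0),\phi\rangle$ (as verified by $\int_0^\infty e^{ixs}e^{-\epsilon s}\,\mathrm{d}s = i/(x+i\epsilon)$ and $\epsilon\downarrow 0$). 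Identity (\ref{eq_36}) is then immediate from (\ref{eq_32}) via the Ces\`aro lemma for tempered distributions (if $f_s\to L$ in $\mathcal{S}'$ then $t^{-1}\int_0^t f_s\,\mathrm{d}s \to L$; reduce to the scalar Ces\`aro theorem by pairing with a test function), applied to $f_{t_1}(x)=(e^{ixt_1}-1)/(ix)$. Identity (\ref{eq_37}) follows from the factorization $\frac{1}{t}\bigl\vert\int_0^t e^{ixs}\,\mathrm{d}s\bigr\vert^2 = 4\sin^2(xt/2)/(tx^2)$, which is $2\pi$ times the classical Fej\'er kernel.

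The substantive work is in (\ref{eq_38}) and (\ref{eq_44}). For (\ref{eq_38}) I would change variables to $u = t_1 - t_2$, $v = t_2 - t_3$, $w = t_3$ (unit Jacobian), which maps the ordered simplex $0\le t_3\le t_2\le t_1\le t$ onto $\{u,v,w\ge 0,\,u+v+w\le t\}$ and decouples the exponent as $e^{ixu}e^{i(x+y)v}$. Integrating out $w$ yields the factor $(t-u-v)$, so the LHS divided by $t$ becomes $\int\mathbf{1}_{u+v\le t}\,(1-(u+v)/t)\,e^{ixu}e^{i(x+y)v}\,\mathrm{d}u\,\mathrm{d}v$. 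In the limit $t\to\infty$, dominated convergence against a Schwartz test function in $(x,y)$, together with two applications of (\ref{eq_32}), produces $\bigl(i/(x+i\,0)\bigr)\bigl(i/(x+y+i\,0)\bigr) = -1/\bigl((x+i\,0)(x+y+i\,0)\bigr)$.

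For (\ref{eq_44}) I would carry out the two innermost integrals in closed form; the LHS becomes a linear combination of terms of the shape $(e^{i\alpha t}-1)(e^{i\beta t}-1)/(\alpha\beta t)$. Repeated use of the combinatorial identity $(e^{i\alpha t}-1)(e^{i\beta t}-1) = (e^{i(\alpha+\beta)t}-1)-(e^{i\alpha t}-1)-(e^{i\beta t}-1)$ reorganizes the sum so that a single piece, namely $(x+y)^{-1}\cdot 4\sin^2(xt/2)/(tx^2)$, carries the Fej\'er kernel and by (\ref{eq_37}) contributes $2\pi\delta(x)/(x+y) = 2\pi\delta(x)/y$; the remaining pieces have the form $(e^{i\gamma t}-1)/(t\,p(x,y))$ with $p$ polynomial and vanish in $\mathcal{S}'$ (Ces\`aro on the Schwartz pairings). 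The prescription $1/(y+i\,0)$ on the right-hand side of (\ref{eq_44}) is fixed by the half-line $t_2\in[0,t]$ integration, which enters through a single application of (\ref{eq_32}) in the $y$-variable once $\delta(x)$ has been extracted.

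\textbf{Main obstacle.} The real delicacy lies in (\ref{eq_44}): the LHS is not a genuine product of convergent distributions, since each of its two factors blows up on a codimension-one set as $t\to\infty$, and one cannot form $\lim_t \cdot \lim_t$. The passage to the limit has to be organized at the level of the full expression, relying on the cancellation $(e^{ixt}-1)(e^{-ixt}-1) = 4\sin^2(xt/2)$ to extract a bona fide $\delta$-function out of what is otherwise an ill-defined product, while simultaneously confirming that the correct $+i\,0$ prescription (rather than $-i\,0$ or a principal value) appears in the surviving factor. A parallel but milder issue arises in (\ref{eq_38}): the weight $1-(u+v)/t$ converges to $1$ only pointwise, so exchanging limit and integration against a Schwartz test function in $(x,y)$ must be justified by a dominated-convergence estimate.
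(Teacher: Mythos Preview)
Your treatments of (\ref{eq_32})--(\ref{eq_37}) are fine and coincide with the paper's (the paper also invokes Ces\`aro for (\ref{eq_36}), and its proof of (\ref{eq_37}) via the real part of (\ref{eq_36}) is just another route to the Fej\'er kernel). For (\ref{eq_38}) you take a genuinely different path: the paper drops the outer $t^{-1}\int_0^t\mathrm{d}t_1$ by Ces\`aro, evaluates the remaining double integral explicitly, and then rewrites it as a difference of product distributions to which (\ref{eq_32}) applies. Your change of variables $(u,v,w)=(t_1-t_2,\,t_2-t_3,\,t_3)$ is more direct and arguably cleaner; both approaches are valid.

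The real issue is (\ref{eq_44}). Your splitting produces pieces that are \emph{not individually tempered distributions}: for instance the ``Fej\'er piece'' $4\sin^2(xt/2)/\bigl(tx^2(x+y)\bigr)$ has an unregularized pole along $x+y=0$, and its formal limit $2\pi\delta(x)/(x+y)$ is ambiguous (it could be $\delta(x)/(y+i0)$, $\delta(x)/(y-i0)$, or $\delta(x)\,\mathrm{p.v.}(1/y)$). Correspondingly, your ``remaining pieces'' $(e^{i\gamma t}-1)/\bigl(t\,xy(x+y)\bigr)$ are not distributions either, so the claim that they vanish in $\mathcal{S}'$ is not meaningful as stated. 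The $+i0$ prescription cannot emerge from the Fej\'er piece alone; it must come from the interplay with the other pieces, which therefore cannot simply vanish. Your closing remark that the prescription is ``fixed by a single application of (\ref{eq_32}) in the $y$-variable once $\delta(x)$ has been extracted'' gestures at the right intuition but does not supply the mechanism. The approach can be salvaged by inserting a common regularization (say $x+y\to x+y+i0$) at the outset, verifying it is harmless for the full expression, and then tracking it through the split---but that is work you have not done.

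The paper's proof of (\ref{eq_44}) sidesteps all of this: it observes that the domain $\{0\le t_1,t_2,t_3\le t,\ t_2\ge t_3\}$ is the disjoint union of the three ordered simplices $t_1\ge t_2\ge t_3$, $t_2\ge t_1\ge t_3$, $t_2\ge t_3\ge t_1$, applies (\ref{eq_38}) to each (with the appropriate permutation of the exponents), and sums. The Sokhatsky--Weierstrass formula then collapses the three terms to $2\pi i\,\delta(x)/(y+i0)$, with the correct prescription appearing automatically. This is both shorter and free of regularization subtleties; I recommend you replace your argument for (\ref{eq_44}) with this one.
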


\begin{proof}[Proof of (\ref{eq_32}).] We have
\begin{equation*}
\lim_{t\to+\infty}
\integral{0}{t}{t_1}e^{ixt_1}=\lim_{\substack{t\to+\infty\\ \varepsilon\downarrow 0}}\integral{0}{t}{t_1}e^{ixt_1}e^{-\varepsilon t_1}=
\lim_{\varepsilon\downarrow 0}\integral{0}{\infty}{t_1}e^{ixt_1}e^{-\varepsilon t_1}\,.
\end{equation*}
The computation is justified, because the limits $t\to+\infty$ and $\varepsilon\downarrow 0$ interchange. This is seen by testing the middle expression with $\phi$: For either order of limits one obtains $\integral{0}{\infty}{t_1}\hat{\phi}(-t_1)$, where $\hat{\phi}(-t_1)=\integral{\mathbb{R}}{}{x}e^{ixt_1}\phi(x)$. Finally, the expression under the limit on the r.h.s. equals $i(x+i\varepsilon)^{-1}$.\end{proof}

\begin{proof}[Proof of (\ref{eq_36}).] As a general fact, convergence implies in the Ces\`aro sense; specifically $\lim_{t\to+\infty}D_t= D$ implies $\lim_{t\to+\infty}t^{-1}\integral{0}{t}{t_1}D_{t_1}=D$. By applying this remark to (\ref{eq_32}) the claim follows. \end{proof}

\begin{proof}[Proof of (\ref{eq_37}).] The integral factorizes and equals by Eq.~(\ref{eq_32})
\begin{equation*}
\frac{1}{t}\frac{e^{ixt}-1}{ix}\frac{e^{-ixt}-1}{-ix}=\frac{2-e^{ixt}-e^{-ixt}}{tx^{2}}\;\xrightarrow[t\to+\infty]{}\;2\pi  \delta(x)\,.
\end{equation*}
The limit was established by observing that the expression on the l.h.s. is (twice) the real part of that in (\ref{eq_36}). On its r.h.s. we used the SW formula (\ref{ws}).
\end{proof}
\begin{proof}[Proof of (\ref{eq_38}).] By the remark made earlier the average $t^{-1}\integral{0}{t}{t_1}$ may be omitted from its l.h.s. and the limit replaced by $t_1\to+\infty$. The remaining double integral is computed by means of
\begin{equation*}
\iintegral{0}{t_1}{t_2}{0}{t_2}{t_3} e^{iy t_2}e^{-i(x+y)t_3}
=\frac{1-e^{-ixt_1}}{x(x+y)}+\frac{1-e^{iyt_1}}{y(x+y)}
\end{equation*}
as
\begin{equation*}
\frac{e^{ixt_{1}}-1}{x\,(x+y)} - \frac{e^{i(x+y)t_{1}}-e^{ixt_{1}}}{y\,(x+y)}  = \frac{1}{y}\left( \frac{e^{ixt_{1}}-1}{x} - \frac{e^{i(x+y)t_{1}}-1}{x+y}\right)\,.
\end{equation*}
The prefactor on the r.h.s., $y^{-1}$, may be given a regularization, $(y\pm i\,0)^{-1}$. It is dispensable, because the expression inside the brackets vanishes for $y=0$, but it allows to remove them. More precisely, either term so obtained is a well-defined distribution; in fact a product distribution in $x$ and $y$, respectively in $x+y$ and $y$. By (\ref{eq_32}) the sought limit is
\begin{equation*}
-\frac{1}{y\pm i\,0}\left( \frac{1}{x+i\,0} -\frac{1}{x+y+i\,0} \right) = -\frac{1}{(x+i\,0)(x+y+i\,0)}\,.
\end{equation*}
\end{proof}
\comment{
\begin{proof}[Proof of (\ref{eq_44}).] As a preliminary let us draw a conclusion from (\ref{eq_32}). In its middle expression we may regularize $x^{-1}$ by $(x+ i\,0)^{-1}$ for the reason and the purpose stated earlier. We find
\begin{equation}
\label{p1}
\frac{e^{ixt}}{x+i\,0}\;\xrightarrow[t\to+\infty]{}\;0\,.
\end{equation}
The double integral occurring in (\ref{eq_44}) is (\ref{p4}) with $t_1$ replaced by $t$. We so obtain for the l.h.s. of (\ref{eq_44})
\begin{equation}
	\label{eq_45}
	 \frac{1}{t}\left( \frac{e^{ixt}-1}{ix}\right)
			\left(\frac{1-e^{-ixt}}{x(x+y)}+\frac{1-e^{iyt}}{y(x+y)}\right)
			= \frac{1}{t}\bigl(D_{1t}(x,y)-D_{2t}(x,y)\bigr)\,,
	\end{equation}
	where
	\begin{equation*}
		D_{1t}(x,y)=\frac{e^{ixt}+e^{-ixt}-2}{ix^{2}(x+y+i\,0)}\,, \qquad
		D_{2t}(x,y)=\left(\frac{e^{ixt}-1}{x}\right)\left(\frac{e^{iyt}-1}{y}\right)\frac{1}{i(x+y+i\,0)}\,.
	\end{equation*}
Note again that the regularization $+i\,0$ is not required in the difference $D_{1t}-D_{2t}$, but allows to compute limits separately. In fact, we set up the claims
	\begin{gather*}
		\lim_{t\to+\infty} \frac{D_{1t}(x,y)}{t}=2\pi i \delta(x) \frac{1}{y+i\,0}\, ,\\
	\lim_{t\to+\infty} \frac{D_{2t}(x,y)}{t}=\lim_{t\to+\infty} \frac{dD_{2t}}{dt}(x,y)=0\,,
\end{gather*}
from which the result is immediate. The first claim is implied by (\ref{p3}). The first part of the second is by the earlier remark on the Ces\`aro limit applied to $dD_t/dt$ instead of $D_t$. Finally, we have $dD_{2t}(x,y)/dt=D_{3t}(x,y)+D_{3t}(y,x)$ with
\begin{equation*}
D_{3t}(x,y)= \frac{1-e^{-iyt}}{y}\frac{e^{i(x+y)t}}{x+y+i\,0}\;\xrightarrow[t\to+\infty]{}\; \frac{1}{y-i\,0}\cdot 0=0
	\end{equation*}
by (\ref{eq_32}) with $x$ replaced by $-y$ and by (\ref{p1}) .
\end{proof}
}

\begin{proof}[Proof of (\ref{eq_44}).] The integrals on the l.h.s of (\ref{eq_38}, \ref{eq_44}) only differ by their domains. In fact, while in the first expression we demand $t_1\geqslant t_2\geqslant t_3$, in the second only $t_2\geqslant t_3$ is required. This is however equivalent to adding the cases $t_2\geqslant t_1\geqslant t_3$ and $t_2\geqslant t_3\geqslant t_1$ to the previous one. By (\ref{eq_38}) this yields in the limit
\begin{equation*}
	-\frac{1}{(x+i\,0)(x+y+i\,0)}-\frac{1}{(y+i\,0)(x+y+i\,0)}-\frac{1}{(y+i\,0)(-x+i\,0)}
	= 2\pi i\,\delta(x)\frac{1}{y+i\,0}\,,
\end{equation*}
where we applied the SW formula (\ref{ws}) to the third term on the l.h.s.
\end{proof}

\begin{proposition} For any test functions $\rho$, $\rho_i$, $(i=1,2,3)$ we have 	
\begin{align}
	\label{eq_41}
	\frac{\rho(lx)}{x\pm i\,0}
	\;\xrightarrow[l\to+\infty]{}&\; \left( \integral{\mathbb{R}}{}{v} \frac{\rho (v)}{v \pm i\,0}  \right) \delta(x)\,,\\
\label{eq_42}
\frac{\rho_1(lx)\rho_2(ly)\rho_3(l(x-y))}{(x\pm i\,0)(y\pm i\,0)}
	&\;\xrightarrow[l\to+\infty]{}\; \left( \integraal{\mathbb{R}^2}{}{u}{v} \frac{\rho_1(u)\rho_2(v)\rho_3(u-v)}{(u \pm i\,0)(v \pm i\,0)}  \right) \delta(x) \delta(y)
\,.
	\end{align}
\end{proposition}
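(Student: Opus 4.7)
The strategy is the same for both claims: pair with a test function $\phi\in\mathcal{S}$, rescale so that the limit $l\to\infty$ acts only on $\phi$, and pass to the limit inside the integral.

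For (\ref{eq_41}), substitute $v=lx$. The Jacobian $l^{-1}$ is cancelled by the dilation identity $(x\pm i\,0)^{-1}=l\,(v\pm i\,0)^{-1}$ (valid for $l>0$, as seen from the regularization $x\pm i\varepsilon$ followed by $\varepsilon\downarrow 0$), yielding
\[\int_{\mathbb{R}}\mathrm{d}x\,\frac{\rho(lx)}{x\pm i\,0}\phi(x)=\int_{\mathbb{R}}\mathrm{d}v\,\frac{\rho(v)}{v\pm i\,0}\phi(v/l).\]
Since $\phi(v/l)\to\phi(0)$ as $l\to\infty$, the announced limit $\bigl(\int\rho(v)/(v\pm i\,0)\,\mathrm{d}v\bigr)\phi(0)$ follows once the interchange is justified.

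For (\ref{eq_42}), the analogous substitution $(u,v)=(lx,ly)$ produces a Jacobian $l^{-2}$ cancelled exactly by two homogeneity factors from the two denominators, and $\rho_3(l(x-y))$ becomes $\rho_3(u-v)$ automatically. One obtains
\[\int_{\mathbb{R}^2}\mathrm{d}u\,\mathrm{d}v\,\frac{\rho_1(u)\rho_2(v)\rho_3(u-v)}{(u\pm i\,0)(v\pm i\,0)}\phi(u/l,v/l),\]
which should converge to the claimed integral times $\phi(0,0)$.

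The main technical point is the passage to the limit against these distributions. Using the Sokhatsky--Weierstrass formula (\ref{ws}) I would split each factor $(x\pm i\,0)^{-1}$ into $\mathrm{PV}(1/x)\mp i\pi\delta(x)$. The $\delta$-pieces evaluate $\phi$ at the origin and pose no issue. For the principal-value part one writes $\phi(v/l)=\phi(0)+[\phi(v/l)-\phi(0)]$: the first summand produces $\phi(0)\int\mathrm{PV}(\rho(v)/v)\,\mathrm{d}v$ exactly, while the second is regular at $v=0$ since $|\phi(v/l)-\phi(0)|\leq \|\phi'\|_\infty |v|/l$, and the Schwartz decay of $\rho$ then gives vanishing in the limit $l\to\infty$ by dominated convergence. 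The two-variable case is handled identically via Fubini, because the two denominators act on independent variables and their tensor product is an unambiguous tempered distribution on $\mathbb{R}^2$; the four cross-terms arising from (\ref{ws}) each reduce to a one-variable computation. I expect no deeper analytic obstacle beyond this bookkeeping of the principal-value estimates.
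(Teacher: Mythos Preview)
Your approach for (\ref{eq_41}) is essentially the paper's: rescale $v=lx$, write $\phi(v/l)=\phi(0)+[\phi(v/l)-\phi(0)]$, and use the Lipschitz bound $|\phi(v/l)-\phi(0)|\le\|\phi'\|_\infty|v|/l$ to cancel the singularity and conclude $O(l^{-1})$. The paper does this directly against $(x\pm i0)^{-1}$ without the Sokhatsky--Weierstrass split, since the factor $|v|$ already neutralizes the pole; your decomposition into $\mathrm{PV}$ and $\delta$ is harmless but unnecessary.

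For (\ref{eq_42}) there is a small gap in your sketch: the $\mathrm{PV}(1/u)\,\mathrm{PV}(1/v)$ term does \emph{not} reduce to a one-variable computation, because $\rho_3(u-v)$ couples the two variables and prevents the integrand from factorizing. The paper avoids this issue by skipping the SW split altogether and instead expanding the test function as
\[
\phi(x,y)-\phi(0,0)=x\,\phi_1(x)+y\,\phi_2(y)+xy\,\phi_{12}(x,y)
\]
for suitable test functions $\phi_1,\phi_2,\phi_{12}$. Each term carries an explicit factor of $x$, $y$, or $xy$ that cancels the corresponding pole(s) in $(x\pm i0)^{-1}(y\pm i0)^{-1}$. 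After rescaling, the first contribution becomes $l^{-1}\int\rho_1(x)\tilde\rho(x)\phi_1(x/l)\,\mathrm{d}x=O(l^{-1})$, where $\tilde\rho(x)=\int(y\pm i0)^{-1}\rho_2(y)\rho_3(x-y)\,\mathrm{d}y$ is a bounded function; the second is analogous, and the third is $O(l^{-2})$. Your plan can be repaired along the same lines (or, more abstractly, by noting that $\rho_1(u)\rho_2(v)\rho_3(u-v)\phi(u/l,v/l)$ converges in $\mathcal{S}(\mathbb{R}^2)$), but the Hadamard-type splitting of $\phi$ is more direct than chasing the $\mathrm{PV}\times\mathrm{PV}$ cross-term through iterated principal-value estimates.
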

\begin{proof} [Proof of (\ref{eq_41}).] We probe the convergence on any test function $\phi$. By substituting $x$ with $x/l$ the claim may be recast as
	\begin{equation*}
	\integral{\mathbb{R}}{}{x}
		\frac{\rho(x)}{x\pm i\,0} \bigl( \phi(x/l) - \phi(0)\bigr) \;\xrightarrow[l\to+\infty]{}\; 0\,.	\end{equation*}
The mean value theorem yields $|\phi(x) - \phi(0)|\le\Vert\phi'\Vert_\infty|x|$ and the sufficient bound
	\begin{equation*}
\integral{\mathbb{R}}{}{x}\Bigl|\frac{\rho(x)}{x}\Bigr|\Vert\phi'\Vert_\infty\Bigl|\frac{x}{l}\Bigr|=\Vert\rho\Vert_1\Vert\phi'\Vert_\infty l^{-1}\,.
		\end{equation*}	
\end{proof}
\begin{proof} [Proof of (\ref{eq_42}).] The proof is similar to the previous one. To be shown is
\begin{equation*}
	\integraal{\mathbb{R}^2}{}{x}{y}\frac{\rho_1(x)\rho_2(y)\rho_3(x-y)}{(x\pm i\,0)(y\pm i\,0)} \bigl( \phi(x/l,y/l) - \phi(0,0)\bigr) \;\xrightarrow[l\to+\infty]{}\; 0\,.	
\end{equation*}
However, instead of using the mean value theorem, we may write
\begin{equation*}
\phi(x,y) - \phi(0,0)=x\phi_1(x)+y\phi_2(y)+xy\phi_{12}(x,y)\,,
\end{equation*}
for some further test functions $\phi_i$, $(i=1,2,12)$. Three terms then require estimation. The first one is $l^{-1}$ times
\begin{equation*}
\integral{\mathbb{R}}{}{x}\rho_1(x)\rho(x)\phi_1(x/l)=O(1)\,,
\end{equation*}
where $\rho(x)=\integral{\mathbb{R}}{}{y}(y\pm i\,0)^{-1}\rho_2(y)\rho_3(x-y)$ is a bounded function; likewise for the second term. The third one is $O(l^{-2})$.
\end{proof}

\section{Matrix elements of current and charge}
\label{sec_matrix_elements}

In this appendix we will derive the matrix elements of current $I$ and charge $Q$ in the quadratic dispersion model of Sect.~\ref{subsec_quadratic_model}. We also obtain a splitting $\matel{1}{I}{2} = \matel{1}{I}{2}^{(+)}+\matel{1}{I}{2}^{(-)}$ suitable for Eq.~(\ref{eq_427}), i.e.
\begin{equation*}
	i\matel{1}{Q}{2} = \frac{\matel{1}{I}{2}^{(+)}}{E_1-E_2+i\,0} + \frac{\matel{1}{I}{2}^{(-)}}
		{E_1-E_2-i\,0}\,,
\end{equation*}
where $\ket{i}$, ($i=1,2$) denotes the Lippmann-Schwinger state and $E_i=k_i^2$.

We shall first compute $\matel{1}{Q}{2}$. Since $Q(x)$ is supported in $x > x_0$, the wave-function $\psi_{k}(x)$ of $\ket{i}$ will matter only in that region, see Eq.~(\ref{eq_21}). Next we will deduce $\matel{1}{I}{2}$ from Eqs.~(\ref{eq_425a}, \ref{eq_425}), i.e. from
\begin{gather*}
\matel{1}{I}{2}=i(E_1-E_2)\matel{1}{Q}{2}\,,\qquad
	\widehat{Q}(k) = (-i)\frac{\widehat{Q'}(k)}{k-i\,0}\,,
\end{gather*}
where $k=\pm k_1\pm k_2$, with signs occurring in various combinations. Finally the splitting is obtained by factoring $k-i\,0$ out of $E_1-E_2\pm i\,0$, with a sign depending on cases.
\begin{itemize}
	\item $k_1,k_2>0$.
	\begin{align}
		\matel{1}{Q}{2} &= \integral{-\infty}{\infty}{x} \overline{\psi_{k_1}(x)}Q(x)\psi_{k_2}(x)= \integral{-\infty}{\infty}{x} \overline{t(k_1)} e^{-ik_1 x} Q(x) t(k_2) e^{ik_2 x}\nonumber\\
&= \overline{t(k_1)} t(k_2) \widehat{Q}(k_1 - k_2)\,.\label{eq_B2}
	\end{align}
The factorization is $E_1-E_2-i\,0 =(k_1+k_2)(k_1-k_2-i\,0)$, since $k_1+k_2>0$ in the case considered. It is used twice: Without regularization, to obtain
 \begin{equation}
 \label{eq_B4}
		\matel{1}{I}{2} = (k_1 + k_2) \overline{t(k_1)}t(k_2) \widehat{Q'}(k_1-k_2)\,;
\end{equation}
and with it, to get $\matel{1}{I}{2}^{(-)}=\matel{1}{I}{2}$ and $\matel{1}{I}{2}^{(+)}=0$. In particular, by $\widehat{Q'}(0)=1$, we obtain Eq.~(\ref{eq_424}).
	\item $k_1>0$, $k_2<0$.
	\begin{align*}
		\matel{1}{Q}{2}& = \integral{-\infty}{\infty}{x} \overline{t(k_1)} e^{-ik_1 x} Q(x) (e^{ik_2 x} + r(k_2) e^{-ik_2 x})\\
&= \overline{t(k_1)} \widehat{Q}(k_1-k_2) + \overline{t(k_1)}r(k_2)\widehat{Q}(k_1+k_2)\,.
	\end{align*}
The factorization for the second term is $E_1-E_2-i\,0 =(k_1-k_2)(k_1+k_2-i\,0)$; but remains arbitrary for the first one, $E_1-E_2\pm i\,0=(k_1+k_2)(k_1-k_2-i\,0)$, since $k_1-k_2\neq 0$ here. Hence we have
 \begin{equation}
\label{me+-}
		\matel{1}{I}{2}= (k_1+k_2)\overline{t(k_1)}\widehat{Q'}(k_1-k_2) + (k_1-k_2)\overline{t(k_1)}r(k_2)\widehat{Q'}(k_1+k_2)\,,
\end{equation}
and we may again set $\matel{1}{I}{2}^{(-)}=\matel{1}{I}{2}$.
	\item $k_1<0$, $k_2>0$. This case follows from the previous one by complex conjugation and by interchanging $k_1$ and $k_2$:
	\begin{align}
		\matel{1}{Q}{2} &= t(k_2) \widehat{Q}(k_1-k_2) + t(k_1)\overline{r(k_1)}\widehat{Q}(-k_1-k_2)\,,\label{meq+-}\\
			\matel{1}{I}{2} &= (k_1+k_2)t(k_2)\widehat{Q'}(k_1-k_2) + (k_2-k_1)t(k_2)\overline{r(k_1)}\widehat{Q'}(-k_1-k_2)\,.
\label{me-+}	\end{align}
We still have $\matel{1}{I}{2}^{(-)}=\matel{1}{I}{2}$, since $-i(E_1-E_2-i\,0)$ is invariant under the move.
	\item $k_1,k_2<0$. We have
	\begin{align}
		\matel{1}{Q}{2} &= \integral{-\infty}{\infty}{x} (e^{-ik_1 x} + \overline{r(k_1)} e^{ik_1 x}) Q(x)
				(e^{ik_2 x} + r(k_2) e^{-ik_2 x})\nonumber\\
			&= \widehat{Q}(k_1-k_2) + \overline{r(k_1)} \widehat{Q}(-k_1-k_2) + r(k_2) \widehat{Q}(k_1+k_2)
				+ \overline{r(k_1)}r(k_2)\widehat{Q}(k_2-k_1)\,.\label{eq_B3}
\end{align}
The arguments of the two middle terms, $\mp(k_1+k_2)$, are non-vanishing in the case considered, hence their regularization by $-i\,0$ irrelevant. They may thus be linked to either term $\matel{1}{I}{2}^{(\pm)}$. Using $(k_1+k_2)(k_1-k_2\mp i\,0)=E_1-E_2\pm i\,0$ on the remaining two terms we have
\begin{align}
	\matel{1}{I}{2}
		=& (k_1+k_2)\widehat{Q'}(k_1-k_2) + (k_2-k_1)\widehat{Q'}(-k_1-k_2)\nonumber \\
&+(k_1-k_2)r(k_2) \widehat{Q'}(k_1+k_2)-(k_1+k_2) \overline{r(k_1)}r(k_2)\widehat{Q'}(k_2-k_1)\label{me--}
\end{align}
and we may choose to set
\begin{equation*}
	\matel{1}{I}{2}^{(-)} = -(k_1+k_2)\overline{r(k_1)}r(k_2)\widehat{Q'}(k_2-k_1)\,.
\end{equation*}
In particular, we find Eq.~(\ref{eq_431}).
\end{itemize}

\noindent
{\bf Acknowledgments.} We thank M. Reznikov for discussions and encouragement.

\bibliography{to_and_cs_biblio}
\bibliographystyle{unsrt}
\end{document}